\newtheorem{theorem}{Theorem}[section]
\newtheorem{lemma}[theorem]{Lemma}
\newtheorem{corollary}[theorem]{Corollary}
\newtheorem{definition}[theorem]{Definition}
\newcommand{\sq}{\hbox{\rlap{$\sqcap$}$\sqcup$}}
\newcommand{\qed}{\hspace*{\fill}\sq}
\newenvironment{proof}{\noindent {\bf Proof.}\ }{\qed\par\vskip 4mm\par}
\begin{document}
\setcitestyle{aysep={},authoryear}
\title{\bf Optimal mechanisms with simple menus }

\author{Zihe Wang \\
   IIIS, Tsinghua University \\
   wangzihe11@mails.tsinghua.edu.cn
   \and
   Pingzhong Tang \\
   IIIS, Tsinghua University \\
   kenshin@tsinghua.edu.cn \\
   }
\maketitle 
\begin{abstract}
We consider optimal mechanism design for the case with one buyer and two items. The buyer's valuations towards the two items are independent and additive. In this setting, optimal mechanism is unknown for general valuation distributions. We obtain two categories of structural results that shed light on the optimal mechanisms. These results can be summarized into one conclusion: under certain conditions, the optimal mechanisms have simple menus.

The first category of results state that, under certain mild condition, the optimal mechanism has a monotone menu. In other words, in the menu that represents the optimal mechanism, as payment increases, the allocation probabilities for both items increase simultaneously. This theorem complements Hart and Reny's recent result regarding the nonmonotonicity of menu and revenue in multi-item settings. Applying this theorem, we derive a version of revenue monotonicity theorem that states stochastically superior distributions yield more revenue. Moreover, our theorem subsumes a previous result regarding sufficient conditions under which bundling is optimal~\citep{Hart2012a}. The second category of results state that, under certain conditions, the optimal mechanisms have few menu items. Our first result in this category says, for certain distributions, the optimal menu contains at most 4 items. The condition admits power (including uniform) density functions. Our second result in this category works for a weaker condition, under which the optimal menu contains at most 6 items. This condition includes exponential density functions. Our last result in this category works for unit-demand setting. It states that, for uniform distributions, the optimal menu contains at most 5 items. All these results are in sharp contrast to Hart and Nisan's recent result that finite-sized menu cannot guarantee any positive fraction of optimal revenue for correlated valuation distributions.
\end{abstract}
\vspace{-0.2in}
\section{Introduction}

Optimal mechanism design has been a topic of intensive research over the past thirty years. The general problem is, for a seller, to design a revenue-maximizing mechanism for selling $k$ items to $n$ buyers, given the buyers' valuations distributions but not the actual values. A special case of the problem, where there is only one item ($k=1$) and buyers have independent valuation distributions towards the item, has been resolved by Myerson's seminal work~\citep{Myerson81}. Myerson's approach has turned out to be quite general and has been successfully applied to a number of similar settings, such as ~\citep{Maskin89:Optimal,Jehiel1996,Levin97,Ledyard2007,Deng2011}, just to name a few.

While this line of work has flourished, it does not deepen our understanding of the cases with more than one items ($k>1$). In fact, even for the simplest multi-item case, where there are two independent items (k=2) and one buyer (n=1) with additive valuations, a direct characterization of the optimal mechanism is still open for general, especially continuous, valuation distributions.

When the distributions are discrete, ~\citet{Daskalakis2011,Cai2012a,Cai2012b} show that the general optimal mechanism ($k>1$) is the solution of a linear program. They provide different methods to solve the linear program efficiently. For continuous distributions, ~\citet{Chawla2010,Cai13} study the possibility of using simple auctions to approximate optimal auctions. In addition, ~\citet{Daskalakis12,Cai13} provide PTAS of the optimal auction under various assumptions on distributions.

Zoom in and look at the case with two independent items and a single buyer, much progress has been made in this particular setting.~\citet{Hart2012a} investigate two simplest forms of auctions: selling the two items separately and selling them as a bundle. They prove that selling separately obtains at least one half of the optimal revenue while bundling always returns at least one half of separate sale revenue. They further extend these results to the general case with $k$ independent items: separate sale guarantees at least a $\frac{c}{log^2k}$ fraction of the optimal revenue; for identically distributed items, bundling guarantees at least a $\frac{c}{log~k}$ fraction of the optimal revenue.~\citet{Yao2013} tighten these lower bounds to $\frac{c}{log~k}$ and $c$ respectively.
Under some technical assumptions, ~\cite{Daskalakis13} show close relation between mechanism design and transport problem and use techniques there to solve for optimal mechanisms in a few special distributions.~\citet{Hart2013} investigate how the ``menu size'' of an auction can effect the revenue and show that revenue of any finite menu-sized auction can be arbitrarily far from optimal (thus confirm an earlier consensus that restricting attention to deterministic auctions, which has an exponentially-sized (at most) menu, indeed loses generality). On the economic front,~\citet{Manelli2006,Manelli2007,Pavlov2011a,Pavlov2011b} obtain the optimal mechanisms in several specific distributions (such as both items are distributed according to uniform [0,1]). We will discuss these results in much more detail as we proceed to relevant sections.

In this paper, we study the case with one buyer and two independent items, in hopes of a direct characterization of \emph{exact} optimal mechanisms. We obtain several exciting structural results. \emph{Our overall conclusion is that, under fairly reasonable conditions, optimal mechanisms has ``simple'' menus.} We summarize our results below into two parts, based on the conditions under which the results hold, as well as different interpretations of simplicity.

For ease of presentation, we need the following definition: for a density function $h$, the \emph{power rate} of $h$ is $PR(h(x))=\frac{xh'(x)}{h(x)}$.
\begin{itemize}
\item Part I (Section~\ref{sec:part1}). If density functions $f_1$ and $f_2$ satisfy $PR(f_1(x))+PR(f_2(y))\leq -3,~\forall x,y$. The optimal mechanism has a monotone menu -- sort the menu items in ascending order of payments, the allocation probabilities of both items increase simultaneously -- a desirable property that fails to hold in general (cf~\citep{Hart2012b}). Our result complements this observation in general and has two important implications in particular.
    \vspace{-0.1in}
    \begin{enumerate}
    \item ~\cite[Theorem 28]{Hart2012a}. Hart and Nisan show that, if two item distribution are further identical (i.e., $f_1=f_2$), bundling sale is optimal. Our result subsumes this theorem as a corollary.
    \vspace{-0.1in}
    \item A revenue monotonicity theorem. Based on menu monotonicity theorem, we are able to prove that, stochastically superior distributions yield higher revenue, another desirable property that fails to hold in general.
    \end{enumerate}
    \vspace{-0.1in}
    Our proof is semi-constructive in the sense that we fix part of the buyer utility function (for this part, relation between revenue and buyer utility  is unknown/undesirable) and construct the remainder of the utility function (for this part, relation between revenue and buyer utility is known/desirable). This technique might be of independent interest.
\item Part II. (Section~\ref{sec:part2}). If the density functions $f_1$ and $f_2$ satisfy $PR(f_1(x))+PR(f_2(y))\geq -3,~\forall x,y$. The optimal mechanisms often contain few menu items. In particular,
    \vspace{-0.1in}
    \begin{enumerate}
    \item If both $PR(f_1(x))$ and $PR(f_2(y))$ are constants, the optimal mechanism contains at most 4 menu items. The result is tight. Constant power rate is satisfied by a few interesting classes of density functions, including power functions $h(x)=ax^b$ and uniform density as a special case. This is consistent with earlier results for uniform distributions~\citep{Manelli2006,Pavlov2011a}: the optimal mechanisms indeed contain four menu items.
    \vspace{-0.05in}
    \item If $PR(f_1(x))\leq y_Af_2(y_A)-2\textrm{ and }PR(f_2(y))\leq x_Af_1(x_A)-2,$ where $x_A$ and $y_A$ are the lowest possible valuations for item one and two respectively, the optimal mechanism contains 2 menu items. 
    \vspace{-0.1in}
    \item If we relax the condition to be that both $PR(f_1(x))$ and $PR(f_2(y))$ are monotone, then the optimal mechanism contains at most 6 menu items. This condition is general and admits density functions such as exponential density and polynomial density.
    \vspace{-0.1in}
    \item Our last result requires the buyer demands at most one item. Under this condition, for uniform densities, the optimal mechanism contains at most 5 menu items. The result is tight.
    \end{enumerate}
    \vspace{-0.1in}
    These results are in sharp contrast to Hart and Nisan's recent result that there is some distribution where finite number of menu items cannot guarantee any fraction of revenue~\citep{Hart2013}. Here we show that, for several wide classes of distributions, the optimal mechanisms have a finite and even extremely simple menus.

    Our proofs for this part are based on Pavlov's characterization and careful analyses of how the revenue changes as a function of the buyer's utility. A rough line of reasoning is as follows, the ``extreme points'' in the set of convex utility functions on the boundary values are piecewise linear functions. Since the utility on the boundaries contains only few linear pieces and the utility on inner values are linearly related to that on the boundary, it must be the case that the utility function on the inner points contains only few linear pieces as well. In other words, the mechanism only contains few menu items.
\end{itemize}

Our results not only offer original insights of ``what do optimal mechanisms look like'', but are also in line with the ``simple versus optimal'' literature (cf~\citep{Hartline09,Hart2012a}): in our case, simple mechanisms are exactly optimal.
\vspace{-0.1in}
\section{The setting}

We consider a setting with one seller who has two distinct items for sale, and one buyer who has private valuation $x$ for item 1, $y$ for the item 2, and $x+y$ for both items. The seller has zero valuation for any subset of items.

As usual, $x$ and $y$ are unknown to the seller and are treated as independent random variables according to density functions $f_1$ on $[x_A,x_B]\subset \mathbb{R}$ and $f_2$ on $[y_A,y_C]\subset \mathbb{R}$ respectively. The valuation (aka. type) space of the buyer is then $V=[x_A,x_B]\times [y_A,y_C]$. To visualize, we sometimes refer to $V$ as rectangle $ABDC$, where $A$ represents the lowest possible type $(x_A, y_A)$ and $D$ represents the highest possible type $(x_B,y_C)$. Let $f(x,y)=f_1(x)f_2(y)$ be the joint density on $V$. We assume the $f_1$ and $f_2$ are positive, bounded, and differentiable densities.

The seller sells the items through a mechanism that consists of an allocation rule $q$ and a payment rule $t$. In our two-item setting, an allocation rule is conveniently represented by $q=(q_1,q_2)$, where $q_i$ is the probability that buyer gets item $i\in \{1,2\}$. Given valuation $(x,y)$, buyer's utility is \vspace{-0.1in}
$$u(x,y)=xq_1(x,y)+yq_2(x,y)-t(x,y)\vspace{-0.1in}    $$

In other words, buyer has a quasi-linear, additive utility function. It is sometimes convenient to view a mechanism as a (possibly infinite) set of {\em menu items} $\{(q_1(x,y), q_2(x,y), t(x,y))|(x,y)\in [x_A,x_B]\times[y_A,y_C]\}$. Given a mechanism, the expected revenue of the seller is $R=\mathbb{E}_{(x,y)}[t(x,y)]$.


A mechanism must be {\em Individually Rational (IR)}:
\vspace{-0.1in}
$$\forall (x, y), u(x,y)\geq 0.$$
In other words, a buyer cannot get negative utility by participation.

By revelation principle, it is without loss of generality to focus on the set of mechanisms that are {\em Incentive Compatible (IC)}: $$\forall (x, y),(x',y'), u(x,y)\geq xq_1(x',y')+yq_2(x',y')-t(x',y').$$ This means, it is the buyer's (weak) dominant strategy to report truthfully. Equivalently, an IC mechanism presents a set of menu items and let the buyer do the selection (aka. the taxation principle). As a result, $$u(x,y)=sup_{(x',y')}\{xq_1(x',y')+yq_2(x',y')-t(x',y')\},$$ which is the supremum of a set of linear functions of $(x,y)$. Thus, $u$ must be convex.


Fixing $y$, by $IC$, we have
\vspace{-0.1in}
\begin{eqnarray}
&&u(x',y)-u(x,y)-q_1(x,y)(x'-x)\nonumber\\
&=&x'q_1(x',y)+yq_2(x',y)-t(x',y)-xq_1(x,y)-yq_2(x,y)+t(x,y)-x'q_1(x,y)+xq_1(x,y)\nonumber\\
&=&x'q_1(x',y)+yq_2(x',y)-t(x',y)-(x'q_1(x,y)+yq_2(x,y)-t(x,y))\geq 0\nonumber\vspace{-0.1in}
\end{eqnarray}
Substitute $x'$ twice by $x^-=x-\epsilon$ and $x^+=x+\epsilon$ respectively, for any arbitrarily small positive $\epsilon$, we have
$$pu_x(x^-,y)\leq q_1(x,y)\leq u_x(x^+,y),$$ where $u_x$ denotes the partial derivative of $u$ on the $x$ dimension. The inequality above implies $u$ is differentiable almost everywhere on $x$ and $u_x=q_1(x,y)$. Similarly, u is differentiable almost everywhere on $y$ and $u_y=q_2(x,y)$.
As a result $u_x$ and $u_y$ must be within interval $[0,1]$. This means, the seller can never allocate more than one pieces of either item.
Now, payment function $t$ can be represented by utility function $u$, $t(x,y)=xu_x(x,y)+yu_y(x,y)-u(x,y)$.

The seller's problem is to design a non-negative, convex utility function, whose partial derivatives on both $x$ and $y$ are within $[0,1]$, that maximizes expected revenue $R$ (cf.~\cite[Lemma 5]{Hart2012a}).
\vspace{-0.1in}
\section{Representing revenue as a function of utility}

Let $\Omega$ denote any area in $V$ and $R_\Omega$ be the revenue obtained within $\Omega$. Let $z=(x,y)^T$ and $\mathbf{T}(z)=zu(z)f(z)$.

By Green's Theorem, we have
$\int_\Omega \nabla\cdot\mathbf{T}dz=\oint_{\partial \Omega}\mathbf{T}\cdot \mathbf{\hat n}ds.$
\begin{eqnarray}
\nabla\cdot\mathbf{T} &=& 2u(z)f(z)+(\nabla u(z))^Tzf(z)+u(z)z^T \nabla f(z)\nonumber\\
&=&[(\nabla u(z))^Tz-u(z)]f(z)+[3f(z)+z^T \nabla f(z)]u(z)\nonumber\\
&=&t(z)f(z)+\triangle(z)u(z)\nonumber
\end{eqnarray}
where $\triangle(x,y)=3f_1(x)f_2(y)+xf'_1(x)f_2(y)+yf'_2(y)f_1(x)$.


Seller's revenue formula within $\Omega$ is as follows:
\begin{eqnarray}
R_{\Omega}&=&\int_\Omega t(z)f(z)dz=\int_\Omega (\nabla\cdot\mathbf{T}-\triangle(z)u(z))dz\nonumber\\
&=&\oint_{\partial \Omega}\mathbf{T}\cdot \mathbf{\hat n}ds - \int_\Omega \triangle(z)u(z)dz\nonumber
\end{eqnarray}

Set $\Omega$ to be the rectangle $ABDC$, the seller's total revenue $R_{ABDC}$ is
\begin{eqnarray}
& &\int_{y_A}^{y_C}x_Bu(x_B,y)f_1(x_B)f_2(y)dy+\int_{x_A}^{x_B}y_Cu(x,y_C)f_1(x)f_2(y_C)dx\nonumber\\
& &-\int_{y_A}^{y_C}x_Au(x_A,y)f_1(x_A)f_2(y)dy-\int_{x_A}^{x_B}y_Au(x,y_A)f_1(x)f_2(y_A)dx\nonumber\\
& &-\int_{x_A}^{x_B}\int_{y_A}^{y_C}u(x,y)\triangle(x,y)dydx \label{eq1}
\end{eqnarray}

Formula~(\ref{eq1}) consists of 5 terms. The first term represents the part of seller's revenue that depends on utilities on edge $BD$ only. Moreover, this part is increasing as utilities on edge $BD$ increase. Similarly, the second term represents the part of seller's revenue that depends positively on utilities on edge $CD$. The third and fourth terms represent respectively the parts of seller's revenue that depend negatively on utilities on edges $AC$ and $AB$. The fifth term represents the part of revenue that depends on the utilities on the inner points of the rectangle. Under different conditions, $\triangle(x,y)$ can be either positive or negative, which suggests this part can either increase or decrease as the utilities on inner points increases. We now define these conditions.
\vspace{-0.1in}
\begin{definition}
For any density $h(x)$, let $PR(h(x))=\frac{xh'(x)}{h(x)}$ be the {\em power rate} of $h$.
\end{definition}

Consider the following two conditions regarding power rate.
\vspace{-0.1in}
\begin{displaymath}
\mbox{\textbf{Condition 1:}~~}  PR(f_1(x))+PR(f_2(y)) \leq -3, \forall (x,y)\in V.
\end{displaymath}
\begin{displaymath}
\mbox{\textbf{Condition 2:}~~}  PR(f_1(x))+PR(f_2(y)) \geq -3, \forall (x,y)\in V.
\end{displaymath}

Clearly, under Condition 1, we have $\triangle(x,y)\leq 0$. This means seller's revenue depends positively on utilities of the inner points. Similarly, under Condition 2, seller's revenue depends negatively on utilities of the inner points.

Based on the two conditions above, we obtain two parts of results: under Condition 1, the optimal mechanisms have simple menus in the sense that their menus are monotone -- allocation probabilities and payment are increasing in the same order; under Condition 2, the optimal mechanisms also have simple menus, but in a different sense, that their menus only contain a few items.

\cite{Daskalakis13} consider the same problem but restrict to the case where
$$y_Af_2(y_A)=0, x_Af_1(x_A)=0;$$
$$\lim_{x\rightarrow x_B}x^2f_1(x)=0, \lim_{y\rightarrow y_C}y^2f_2(y)=0.$$

These assumptions ignore the effect of utilities on the edges of the rectangle. While we do not have any of these constraints. As a result, their techniques (such as reduction to optimal transport) do not apply to our more general case. In fact, one of our main techniques is to show how the optimal revenue changes as a function of the utilities on the edges.
\section{Part I: menu monotonicity and revenue monotonicity}
\label{sec:part1}
In this section, we consider the case where power rates of both density functions satisfy Condition 1. When this condition is not met,~\citet{Hart2012b} give several interesting counter-examples of revenue monotonicity: the optimal revenue for stochastically inferior valuation distributions may be greater than that of stochastically superior distributions. When this condition is met, for identical item distributions,~\citet{Hart2012a} prove that, bundling sale is the optimal mechanism. In this section, we show that, under Condition 1, the optimal menu can be sorted so that both allocations as well as payment monotonically increase. We coin this result {\em menu monotonicity theorem}. The theorem has two immediate consequences. First, it yields a version of revenue monotonicity theorem, thus complements the Hart-Reny result above. Second, it subsumes the above Hart-Nisan result as a corollary.

Our analysis starts from a simple observation: any optimal mechanism must extract all of the buyer's valuation when he is in the lowest type.

\begin{lemma}
In the optimal mechanism, $u(x_A,y_A)=0$.
\label{lemma:revenuemonotone1}
\end{lemma}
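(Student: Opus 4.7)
The plan is a straightforward one-line exchange argument: shift the utility function down by a constant and observe that this shifts the payment up by the same constant, contradicting optimality unless $u(x_A,y_A)=0$ to begin with.

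First, I would record the monotonicity of $u$. Since we are already told (in the characterization just before the lemma) that $u_x = q_1 \in [0,1]$ and $u_y = q_2 \in [0,1]$ almost everywhere, the utility function $u$ is non-decreasing in each coordinate. Consequently the minimum of $u$ over the type rectangle $ABDC$ is attained at the lowest type, i.e.\ $u(x,y) \geq u(x_A,y_A)$ for every $(x,y) \in V$.

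Next, assume for contradiction that the optimal mechanism satisfies $u(x_A,y_A) = c > 0$, and define a perturbed utility function $\tilde u(x,y) := u(x,y) - c$. I would then verify that $\tilde u$ is admissible:
\begin{itemize}
\vspace{-0.05in}
\item $\tilde u \geq 0$ on $V$ by the monotonicity step above, so IR is preserved;
\vspace{-0.05in}
\item $\tilde u$ is convex since subtracting a constant preserves convexity, and $\tilde u_x = u_x \in [0,1]$, $\tilde u_y = u_y \in [0,1]$, so by the characterization of IC mechanisms stated in the previous section $\tilde u$ corresponds to a valid IC allocation rule (in fact the same allocation $(q_1,q_2)$ as $u$).
\vspace{-0.05in}
\end{itemize}

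Finally, using the payment identity $t(x,y) = x u_x(x,y) + y u_y(x,y) - u(x,y)$, the new payment rule is
$$\tilde t(x,y) = x\tilde u_x + y\tilde u_y - \tilde u = x u_x + y u_y - u + c = t(x,y)+c,$$
so the new expected revenue equals $R + c > R$, contradicting optimality of the original mechanism. Hence $u(x_A,y_A)=0$.

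I do not anticipate a real obstacle here; the only point requiring slight care is the justification that the minimum of $u$ lies at the corner $A$, which follows immediately from the $[0,1]$-bound on the gradient established in the setup. Everything else is a direct substitution into the payment formula.
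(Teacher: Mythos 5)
Your proof is correct and takes essentially the same approach as the paper: the paper also applies the exchange argument of raising every payment by $u(x_A,y_A)$ (equivalently, lowering $u$ by that constant) and observing that revenue strictly increases. Your write-up is merely more explicit about why the shifted utility remains admissible, which the paper leaves implicit.
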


\begin{proof}
Suppose otherwise that $u(x_A,y_A)>0$, one can revise every menu item from $(q_1(x,y),q_2(x,y),t(x,y))$ to $(q_1(x,y),q_2(x,y),t(x,y)+u(x_A,y_A))$ and obtain a mechanism with strictly higher revenue, contradiction.
\end{proof}

\begin{theorem}
\label{theorem:monotone}
\textbf{Menu Monotonicity}\\
Under Condition 1, each menu item of the optimal mechanism can be labeled by a real number $s$: $(q_1(s),q_2(s),\\t(s))$, such that $q_1(s),q_2(s)$ is weakly increasing and $t(s)$ are strictly increasing in $s$.
\label{theorem:menumonotone}
\end{theorem}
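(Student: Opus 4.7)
The plan is, starting from any optimal utility $u^*$, to construct another optimal utility $\hat u$ whose induced menu is monotone. By Lemma~\ref{lemma:revenuemonotone1}, I may assume $u^*(x_A,y_A)=0$. The central observation is that Condition~1 gives $\triangle(x,y)\leq 0$ throughout $V$, so from formula~(\ref{eq1}) the interior contribution $-\int_V u\,\triangle\,dxdy$ to revenue is pointwise non-decreasing in $u$, while the boundary terms along $BD$ and $CD$ are also non-decreasing in the values of $u$ on those edges. Hence any feasible $\hat u$ that agrees with $u^*$ on the ``negative'' edges $AB$ and $AC$ and satisfies $\hat u\geq u^*$ on the rest of $V$ automatically yields revenue at least that of $u^*$, and is therefore also optimal.

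For the construction I would fix the trace of $u^*$ on $AB\cup AC$---the ``part for which the relation between revenue and utility is undesirable'' in the author's terminology---and build $\hat u$ on the rest of $V$ as the supremum of a carefully chosen monotone family of affine functions. From the one-sided gradients of $u^*$ along $AB$ and $AC$ (together with the payment identity $t=xu_x+yu_y-u$) I extract a set of candidate menu items, then reparameterize them by a single $s$ so that both $q_1(s)$ and $q_2(s)$ are weakly increasing and $t(s)$ is strictly increasing; where necessary, I augment the family with interpolating items so that every boundary direction is covered. Setting
$$\hat u(x,y)=\sup_{s}\{q_1(s)\,x+q_2(s)\,y-t(s)\},$$
$\hat u$ is automatically convex, non-negative, satisfies $\nabla\hat u\in[0,1]^2$, and realizes a mechanism whose menu is exactly the monotone family $\{(q_1(s),q_2(s),t(s))\}_s$.

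The main obstacle is verifying the two consistency properties (i)~$\hat u=u^*$ on $AB\cup AC$ and (ii)~$\hat u\geq u^*$ on the rest of $V$. Property~(i) reduces to checking that for each boundary type the supremum is attained at precisely the menu item $u^*$ assigned to that type; this should follow from how the family was extracted from $\nabla u^*$ along $AB$ and $AC$, together with convexity of $u^*$ along each edge. Property~(ii) is the technical heart and is where the semi-constructive flavor alluded to in the introduction enters: the family, although extracted only from boundary data, must dominate $u^*$ at every interior point. I would prove this by fixing an interior point $(x,y)$, choosing a supporting affine function to $u^*$ at $(x,y)$ with some gradient $(q_1^\star,q_2^\star)$ and offset $t^\star$, and exhibiting an $s$ in the family with $q_1(s)x+q_2(s)y-t(s)\geq q_1^\star x+q_2^\star y-t^\star=u^*(x,y)$. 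A natural route is to push the supporting affine function back to $AB\cup AC$ and use convexity of $u^*$ along the resulting chord to identify such an $s$ among the boundary-derived items. Once (i) and~(ii) are established, $\hat u$ is feasible, yields revenue at least that of $u^*$, and has the monotone menu required, completing the proof.
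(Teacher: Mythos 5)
Your high-level strategy matches the paper's: fix $u^*$ on the ``negative'' edges $AB$ and $AC$, use Condition~1 to argue that raising $u$ elsewhere cannot hurt revenue, and replace $u^*$ by a supremum of affine pieces to obtain an equally-good utility with the desired menu structure. That part is sound and is exactly what the paper does.

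However, the part you flag as ``the technical heart'' is precisely where your sketch stops and the paper's proof does the real work, and your proposed route has a structural issue. You propose to \emph{first} extract a family of menu items from the boundary gradients on $AB\cup AC$, \emph{assume} it can be reparameterized so that $q_1(s),q_2(s)$ are weakly increasing and $t(s)$ strictly increasing, and \emph{then} verify that the resulting $\hat u$ dominates $u^*$ and agrees with it on the boundary. But (a) a one-sided gradient along $AB$ gives you $q_1$ but not $q_2$, and a one-sided gradient along $AC$ gives you $q_2$ but not $q_1$, so it is not specified how to pair them into full menu items; and (b) the existence of a single-parameter monotone reparameterization \emph{is the content of the theorem} -- positing it as a construction constraint is circular unless you exhibit it. The paper resolves both issues by a different construction order: for \emph{every} type $(x_0,y_0)$ it takes the supporting plane $u^{(x_0,y_0),0}$, translates it up until it touches $u(AB)$ or $u(AC)$, then rotates it about its trace on that edge until it touches the other edge (or until the slope caps at~1). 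This yields planes that (i) automatically dominate $u^{(x_0,y_0),0}$, hence $u^*=\sup u^{(x_0,y_0),2}\geq u$; (ii) lie below $u$ on $AB$ and $AC$, so $u^*=u$ there; and (iii) are \emph{uniquely determined} by their intercept $u_k$ at the vertical line through $A$. Monotonicity then \emph{follows} from convexity of $u(AB)$ and $u(AC)$: as $u_k$ decreases, both $q_1$ and $q_2$ weakly increase, and $t=|k|$ strictly increases. In short, you have the right scaffolding, but the crux -- how the affine family is built so that dominance and monotonicity are simultaneously guaranteed -- is the translation-then-rotation construction, and it is missing from your write-up.
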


Roughly speaking, Theorem~\ref{theorem:monotone} suggests that, among the menu items of the optimal mechanism, higher $t$ corresponds to higher $q_1$ and $q_2$. Note that allocation and payment monotonicity are well understood in single-item optimal auction (i.e., Myerson auction) but in general fail to hold in two item settings~\citep{Hart2012b}.

In the following, we give a semi-constructive proof. By Formula~(\ref{eq1}), under Condition 1, we know that seller's revenue is increasing as the utilities of the buyer increases on $V$, only except on edges $AB$ and $AC$. Our idea is then, to fix the optimal utility function on $AB$ and $AC$ and construct the (largest possible) remainder of the utility function subject to convexity.

\begin{proof}
By Lemma~\ref{lemma:revenuemonotone1}, $u(x_A,y_A)=0$.
In the following, we start from the {\em optimal} utility function $u$. Let $u(AB)$ denote the part of $u$ on edge $AB$. Similar for $u(AC)$.

Let $u^{(x_0,y_0),0}(x,y)=xq_1(x_0,y_0)+yq_2(x_0,y_0)-t(x_0,y_0)$ for any $(x_0,y_0)$. In other words, $u^{(x_0,y_0),0}(x,y)$ is the buyer's
 utility at type $(x,y)$ but chooses menu item $(q_1(x_0,y_0)$, $q_2(x_0,y_0)$, $t(x_0,y_0))$. By IC, $u^{(x_0,y_0),0}(x,y)\leq u(x,y)$. To
 visualize, think of $u^{(x_0,y_0),0}(x,y)$ as a plane that is always weakly below $u(x,y)$ but touches $u$ at the point of $(x_0, y_0)$.

Apply the following two-step operation:
\begin{figure}[h]
  \centering
  \includegraphics[width=7cm]{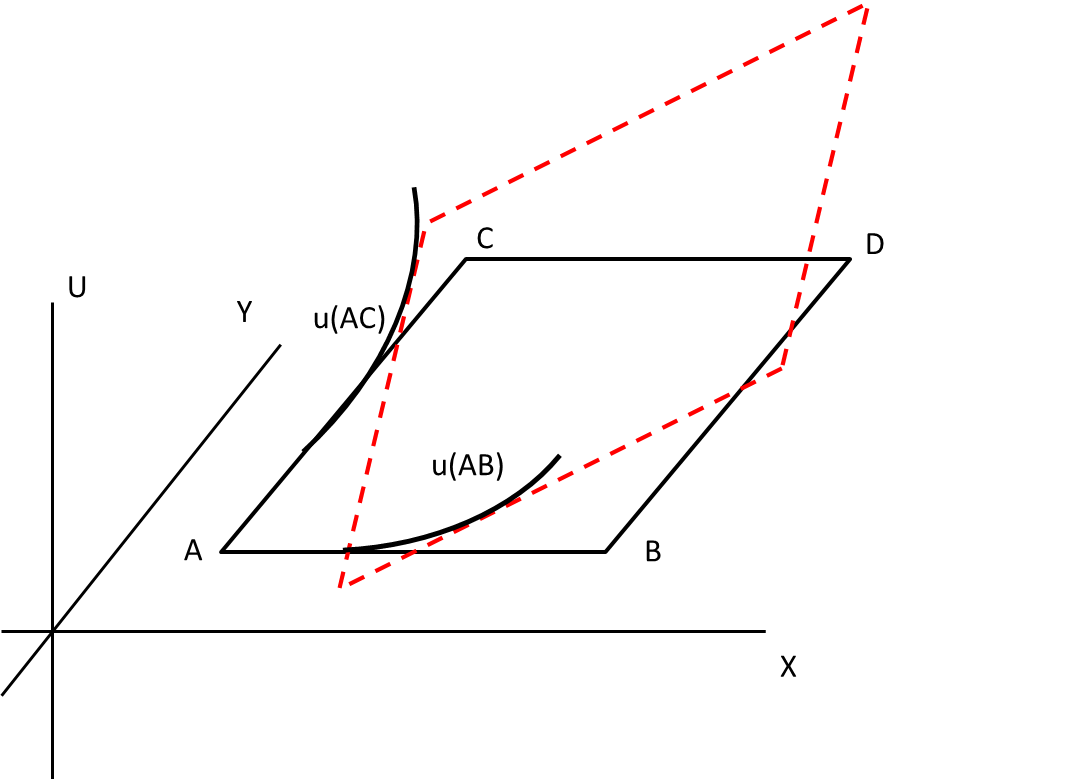}\\
  \caption{A menu item in the optimal mechanism}
  \label{fig:example0}
\end{figure}
\begin{itemize}
\item Step 1. Translation. Rise up the plane of $u^{(x_0,y_0),0}$ uniformly for every $(x,y)$ until it touches one of $u(AB)$ and $u(AC)$, say $u(AB)$.
Denote the plane after Step 1 as $u^{(x_0,y_0),1}$ and notice the part of $u^{(x_0,y_0),1}$ on $AB$, which is a straight line.
\item Step 2. Rotation. Fix $u^{(x_0,y_0),1}(AB)$, rotate the plane $u^{(x_0,y_0),1}$ until it touches $u(AC)$.(We must also consider the case where the
slope of $u^{(x_0,y_0),1}(AC)$ reaches 1 but still does not touches $u(AC)$. If this occurs, simply return the current plane.) Denote the plane after
Step 2 as $u^{(x_0,y_0),2}$. In Fig.~\ref{fig:example0}, the red dashed plane represents plane $u^{(x_0,y_0),2}$, it touches both $u(AB)$ and $u(AC)$.
\end{itemize}

We have $u^{(x_0,y_0),2}(x_0,y_0)\geq u^{(x_0,y_0),1}(x_0,y_0)\geq u^{(x_0,y_0),0}(x_0,y_0)=u(x_0,y_0)$.
Repeat the same procedure for all $(x_0,y_0)$ and define another function $u^*=sup_{(x_0,y_0)} \{u^{(x_0,y_0),2}\}$. In other words, $u^*$ is the supremum
of all the planes after the two-step operation.

We claim that, when the $u$ is optimal, $u^*$ must be optimal as well. In fact, by the construction of $u^*$, we know that $u^*(x,y)\geq u^{(x,y),2}(x,y)
\geq u(x,y)$ for any $(x,y)$. Also, by the construction of $u^{(x,y),2}$, for any $(x_0,y_0)$ on $AB$ or $AC$, $u^{(x_0,y_0),2}(x_0,y_0)=u(x_0,y_0)$
and for any $(x,y)$, $u^{(x,y),2}(x_0,y_0)\leq u(x_0,y_0)$ since any plane $u^{(x,y),2}$ resulted from the two-step operation must weakly remain
below $u(AB)$ and $u(AC)$. Thus, by the definition of $u^*$, $u^*$ and $u$ are identical on $AB$ and $AC$. So $u^*$ will lead to a weakly greater
revenue than u according to Formula(1). Thus, $u^*$ must be optimal as well. This is sufficient to establish that $u^*$ must consist of a set of planes.

Pick one arbitrary plane $\hat{u}$ among $u^*$, it intersects with the u-axis at some point $k=(0,0,k)$, we assume at the same time it intersects with
vertical line $(x=x_A,y=y_A)$ at some point $(x=x_A,y=y_A,u=u_k)$($u_k$ is zero or negative).

We now show that $\hat{u}$ is the unique plane among $u^*$ that goes through point $(x_A,y_A,u_k)$. Consider $\hat{u}\cap (y=y_A)$(intersection line
of two plane $\hat{u}$ and plane $y=y_A$), this line is unique. Moreover, it either has gradient 1 or touches $u(AB)$. So the gradient, say $q_1$, of
this line is unique. Recall that $q_1$ is also the gradient of plane $\hat{u}$ in the $x$-dimension. So the gradient of plane $\hat{u}$ in the
$x$-direction is unique.
For same reason, $q_2$, the gradient of plane $\hat{u}$ in $y$ direction is also unique. Because there is only one plane that goes through point
$(x_A,y_A,u_k)$ can satisfy the two gradient conditions, plane $\hat{u}$ can be uniquely determined by $u_k$.

Finally, lower $u_k$ results in weakly larger $q_1$ and $q_2$ (by convexity of $u(AC)$ and $u(AB)$), strictly lower $k=z_k-q_1x_A-q_2y_A$, and
hence strictly larger payment $t=|k|$. In other words, larger $t$ indeed corresponds to steeper $\hat{u}(AB)$ and $\hat{u}(AC)$, hence larger
$q_1$ and $q_2$. This completes the proof of menu monotonicity.
\end{proof}
\vspace{-0.1in}
Theorem~\ref{theorem:monotone} implies the aforementioned Hart-Nisan result as a corollary.
\begin{corollary}
\vspace{-0.05in}
~\cite[Theorem 28]{Hart2012a} For two i.i.d. items, $PR(f_1)=PR(f_2)\leq -\frac32$, bundling sale is optimal.
\end{corollary}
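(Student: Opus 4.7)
The plan is to derive the corollary from Theorem~\ref{theorem:menumonotone} by combining menu monotonicity with the symmetry of the i.i.d.\ setting. Because $f_1 = f_2$ and $PR(f_1) = PR(f_2) \leq -3/2$, Condition~1 ($PR(f_1(x)) + PR(f_2(y)) \leq -3$) holds, so Theorem~\ref{theorem:menumonotone} supplies a labeling $(q_1(s), q_2(s), t(s))$ of the optimal menu with $t(s)$ strictly increasing and $q_1, q_2$ weakly increasing.

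First I would symmetrize. Formula~(\ref{eq1}) is a linear functional of $u$, and the class of admissible utilities (convex, nonnegative, with partial derivatives in $[0,1]$) is closed under convex combinations. Since the items are i.i.d., if $u^*$ is optimal so is $\tilde u(x,y) := u^*(y,x)$; hence $u_{\mathrm{sym}}(x,y) := \tfrac12\bigl(u^*(x,y)+u^*(y,x)\bigr)$ is a \emph{symmetric} optimal utility. We may therefore assume the optimal menu $M$ is invariant under the coordinate swap $(q_1,q_2,t)\mapsto(q_2,q_1,t)$.

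Next, I would use this invariance together with menu monotonicity to force every menu item to satisfy $q_1 = q_2$. Strict monotonicity of $t(s)$ guaranteed by Theorem~\ref{theorem:menumonotone} means that each payment value is attained by a unique menu item $(q_1, q_2, t)$, so the swap invariance of $M$ forces $q_1(s) = q_2(s) =: q(s)$ for every $s$.

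Finally, with $q_1 \equiv q_2 = q$, each menu item is a bundled offer and the buyer's decision $\max_s\{(x+y)q(s) - t(s)\}$ depends only on $v := x+y$. The original problem reduces to a single-buyer single-item auction with value $v \sim f_1 * f_2$, whose revenue-optimum is a deterministic posted price $p^* = \arg\max_p p(1-G(p))$ by Myerson's one-dimensional analysis (with ironing if needed, since the allocation rule can always be taken to be a threshold); translating back yields the menu $\{(0,0,0),(1,1,p^*)\}$, i.e., bundling sale. The main hurdle is the third step: extracting $q_1 = q_2$ from symmetry plus the uniqueness of $(q_1,q_2)$ per payment value supplied by Theorem~\ref{theorem:menumonotone}. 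The symmetrization and the concluding appeal to one-dimensional Myerson theory are essentially immediate.
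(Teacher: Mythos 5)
Your overall strategy mirrors the paper's (symmetrize, derive $q_1=q_2$ from the menu-monotonicity structure, then reduce to a single-dimensional posted-price problem), but there is a gap in your step from symmetry to $q_1=q_2$. Theorem~\ref{theorem:menumonotone} is an \emph{existence} result: its proof takes an optimal $u$ and builds a possibly different optimal $u^*$ (the supremum of the translated-and-rotated tangent planes) whose menu is monotone. It does not assert that every optimal mechanism admits a labeling with strictly increasing $t$. After symmetrizing to $u_{\mathrm{sym}}$, you immediately apply the theorem's conclusion to $u_{\mathrm{sym}}$'s own menu, as though $u_{\mathrm{sym}}$ were already the monotone mechanism. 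That is not warranted: what you need is a single optimal mechanism that is \emph{simultaneously} symmetric and monotone, and two separate ``there exists an optimal mechanism with property~$X$'' statements do not combine for free.

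The paper's proof closes exactly this hole by re-entering the construction rather than only its conclusion: starting the two-step translate/rotate procedure from a symmetric $u$, each constructed plane $\hat u$ touches both $u(AB)$ and $u(AC)$, and because $u(AB)=u(AC)$ by symmetry, the slopes of $\hat u$ in the $x$- and $y$-directions coincide. This gives $q_1=q_2$ directly, and as a byproduct shows the resulting $u^*$ is itself symmetric. Your argument is salvageable once you verify that the theorem's construction preserves symmetry (so the symmetric optimal mechanism you produce is the one being labeled), or alternatively extract equal slopes directly from $u_{\mathrm{sym}}(AB)=u_{\mathrm{sym}}(AC)$ as the paper does. Your final step --- once $q_1\equiv q_2=q$, the problem collapses to one buyer and one (bundled) item and the Riley--Zeckhauser/Myerson result yields a deterministic posted price for the bundle --- is the same final step the paper takes implicitly, and is fine.
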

\vspace{-0.1in}
\begin{proof}
It is without loss to restrict attention to symmetric mechanisms~\citep{Maskin1984}. Thus, $u(AB)$ is identical to $u(AC)$. $u^1(AB)$ and $u^2(AC)$ have the same slope (in fact, plane $u'$ touches both $u(AB)$ and $u(AC)$ simultaneously). So, $q_1(v)=q_2(v)$ $\forall v\in V$. In other words, the two items are always sold with the same probability. The seller's revenue of this optimal mechanism is equivalent to a mechanism that sells two items as a bundle with the same probability. So bundling is optimal as well.
\end{proof}
\vspace{-0.1in}
As another application of Theorem~\ref{theorem:monotone}, we obtain a \emph{revenue monotonicity} theorem in this setting.
\begin{theorem}
\textbf{(Revenue Monotonicity)}\\
Under Condition 1, optimal revenue is monotone: let $F_i,G_i$ be the cumulative distribution function of density functions $f_i,g_i,i=1,2$, respectively. If $G_1$ and $G_2$ first-order stochastically dominate $F_1$ and $F_2$\footnote{$G_i$ first-order stochastically dominates $F_i$ if $G_i(x)\leq F_i(x)$ for all $x$ and $G_i(x)> F_i(x)$ for some $x$.} respectively, optimal revenue obtained for $(G_1, G_2)$ is no less than that of $(F_1, F_2)$.
\end{theorem}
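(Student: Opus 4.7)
The plan is to exploit the menu monotonicity of the optimal mechanism for $(F_1,F_2)$, together with a standard coupling for first-order stochastic dominance. Let $M^*$ denote an optimal mechanism for $(F_1,F_2)$. By Theorem~\ref{theorem:monotone} and Condition 1, the menu of $M^*$ can be indexed by a parameter $s$ as $(q_1(s),q_2(s),t(s))$, with $q_1,q_2$ weakly increasing in $s$ and $t$ strictly increasing in $s$. I will show that simply running $M^*$ on a buyer drawn from $(G_1,G_2)$ already yields at least as much expected revenue as running $M^*$ on $(F_1,F_2)$; since $M^*$ need not be optimal for $(G_1,G_2)$, this implies the desired inequality.

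The first step is to verify that, under $M^*$, the buyer's payment is componentwise monotone in her type. For any $s_1<s_2$, the gain from choosing $s_2$ over $s_1$ equals
\[
x\bigl[q_1(s_2)-q_1(s_1)\bigr]+y\bigl[q_2(s_2)-q_2(s_1)\bigr]-\bigl[t(s_2)-t(s_1)\bigr],
\]
which, by weak monotonicity of $q_1,q_2$, is weakly increasing in $(x,y)$ componentwise. Hence the set of menu indices optimal at $(x',y')\geq(x,y)$ lies weakly above the set optimal at $(x,y)$, and because $t(s)$ is strictly increasing in $s$, the payment extracted at $(x',y')$ is no smaller than the payment extracted at $(x,y)$.

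The second step is the coupling. By the standard characterization of first-order stochastic dominance, for each $i=1,2$ one may couple $F_i$ and $G_i$ on a common probability space so that $X_i\sim F_i$, $X_i'\sim G_i$, and $X_i'\geq X_i$ almost surely; taking the independent product coupling across items, we have $(X_1',X_2')\geq (X_1,X_2)$ componentwise a.s. Combined with the payment monotonicity established above, the payment collected by $M^*$ from $(X_1',X_2')$ is almost surely at least the payment collected from $(X_1,X_2)$. Taking expectations,
\[
\mathrm{OPT}(G_1,G_2)\;\geq\;\mathrm{Rev}_{G_1\times G_2}(M^*)\;\geq\;\mathrm{Rev}_{F_1\times F_2}(M^*)\;=\;\mathrm{OPT}(F_1,F_2).
\]

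The main obstacle is the first step: a priori, if the buyer is indifferent between several menu items, her chosen payment could in principle behave erratically as her type shifts upward. What saves us is the combination provided by Theorem~\ref{theorem:monotone}: the single-crossing inequality above shifts the argmax set upward in $s$, and the strict monotonicity of $t$ in $s$ then turns this into a genuine inequality on payments. No heavier analytic machinery (e.g.\ differentiability or uniqueness of the best response) is needed.
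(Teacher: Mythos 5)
Your proof is correct and follows essentially the same route as the paper's: both establish componentwise monotonicity of the payment $t$ from Theorem~\ref{theorem:monotone} and then invoke first-order stochastic dominance (you via a monotone coupling, the paper by directly observing that a monotone $t$ has weakly larger expectation under the dominating distribution). The paper reaches payment monotonicity through a case split on whether $q_1$ changes and explicitly imposes seller-favorable tie-breaking, while you use the cleaner single-crossing/monotone-comparative-statics argument --- but be aware that the strong-set-order conclusion only yields payment monotonicity once you also fix a seller-favorable tie-breaking rule, so that point deserves to be stated explicitly rather than absorbed into ``shifts the argmax set upward.''
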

\begin{proof}
Consider any two points $(x_2,y_2)$ and $(x_3,y_2)$, where $x_3>x_2$. If $q_1(x_2,y_2)<q_1(x_3,y_2)$, by Theorem \ref{theorem:menumonotone}, we must have $t(x_2,y_2)<t(x_3,y_2)$.
If $q_1(x_2,y_2)=q_1(x_3,y_2)$, then $q_1(x,y_2)=q_1(x_2,y_2), \forall x\in[x_2,x_3]$.  $u(x_3,y_2)=u(x_2,y_2)+q_1(x_2,y_2)(x_3-x_2)$, which can be achieved by choosing the same menu as $(x_2,y_2)$ chooses.
While buyer at type $(x_3,y_2)$ has several menu items that all achieve the highest utility, we can assume the buyer chooses the menu with the highest payment
(\citep{Hart2012b}).

Thus there is an optimal choice guarantees $t(x_2,y_2)\leq t(x_3,y_2)$.

To sum up, $t(x_2,y_2)\leq t(x_3,y_2)$ when $x_2\leq x_3$. For same reason, $t(x_3,y_2)\leq t(x_3,y_3)$ when $y_2\leq y_3$. Hence $t(x,y)$ is a weakly monotone function in both directions.
Suppose $G_1$ and $G_2$ first-order stochastically dominates $F_1$ and $F_2$ respectively. Let $R(F_1\times F_2)$ denote the optimal revenue when item $1$ and $2$ distributes independently according to $F_1$ and $F_2$. When distribution $G_1\times G_2$ chooses the same mechanism as $F_1\times F_2$ does, let the revenue be $R^*(G_1\times G_2)$. We have $R^*(G_1\times G_2)\geq R(F_1\times F_2)$, since $t$ is weakly monotone. By transitivity, $R(G_1\times G_2)\geq R^*(G_1\times G_2)\geq R(F_1\times F_2)$.
\end{proof}
\section{Part II: Optimal mechanism with small menus}
\vspace{-0.1in}
\label{sec:part2}
In this section, we investigate optimal mechanisms under Condition 2. We obtain several results saying that the optimal mechanism contains only few menu items. All these results are built upon Pavlov's characterization~\citep{Pavlov2011a} and an important lemma introduced in the next subsection.
\vspace{-0.12in}
\subsection{Pavlov's characterization and graph representation lemma}
\label{subsec:part21}


If both $f_1$ and $f_2$ satisfy Condition 2,~\citet[Proposition 2]{Pavlov2011a} states that, in the optimal mechanism, the seller either keeps both items (i.e., $q=(0,0)$), or sells one of the items at probability 1 (i.e., $q_1=1$ or $q_2=1$).

For graphic representation, let the buyer's valuation be within rectangle $ABDC$, we have the following lemma.
\vspace{-0.05in}
\begin{lemma} \vspace{-0.1in}  \textbf{Graph Representation Lemma}\\
\label{lemma:graph}
\begin{figure}
\setlength{\abovecaptionskip}{-0.6cm}
\setlength{\belowcaptionskip}{-0.5cm}
  \centering
  \begin{tabular}{cc}
  \begin{minipage}[t]{3.2in}
  \includegraphics[width=7cm]{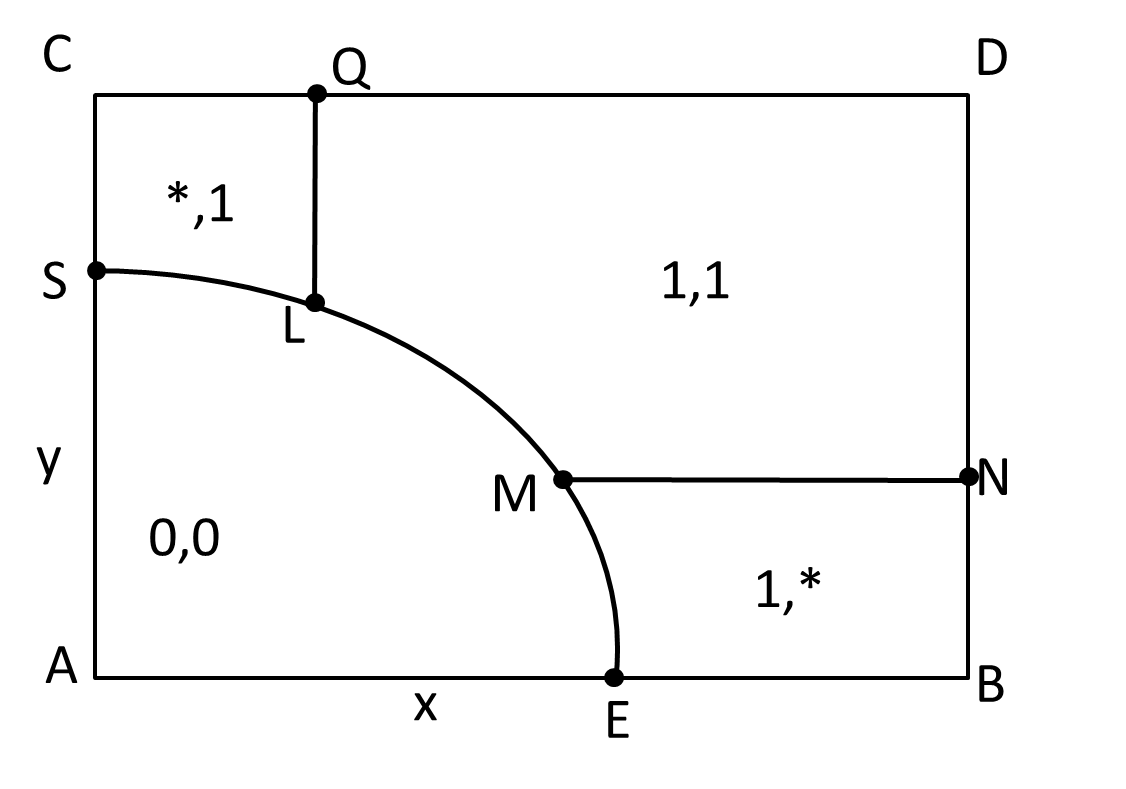}\\
  \caption{The optimal allocation that there is a point on curve SME chooses allocation menu (1,1).}
  \label{fig:example1}
  \end{minipage}
  \begin{minipage}[t]{3.2in}
  \includegraphics[width=7cm]{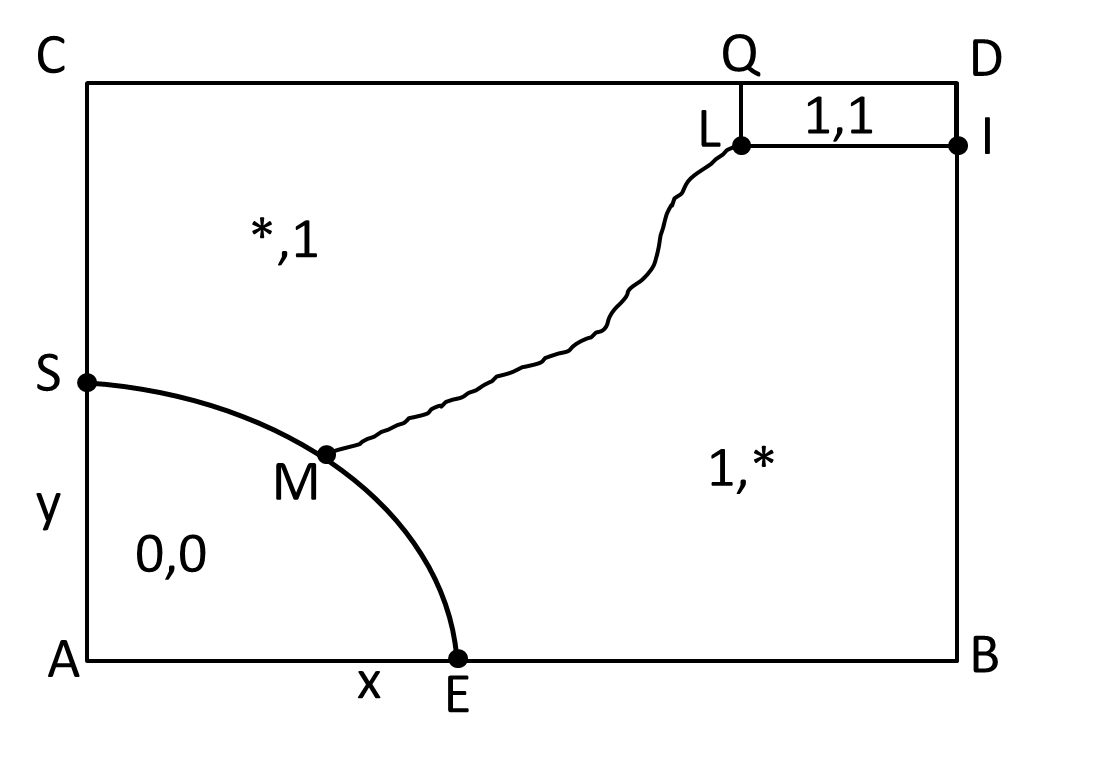}\\
  \caption{The optimal allocation that there is no point on curve SME chooses allocation menu (1,1).}
  \label{fig:example2add}
  \end{minipage}
  \end{tabular}
\end{figure}
\vspace{-0.05in}
Under Condition 2, the optimal mechanism can be represented by one of the rectangles shown in Fig.~\ref{fig:example1} or Fig.~\ref{fig:example2add}. More precisely, the optimal mechanism divides the valuation space into four regions, where
\begin{enumerate}
\item in the bottom left region (region $ASME$ in both figures), it assigns $q=(0,0)$ and $u(x,y)=0$ to any point $(x,y)$ in the region. Furthermore, region $ASME$ is convex.
    \vspace{-0.1in}
\item in the top right region, it assigns $q=(1,1)$ to any point in the region.
\vspace{-0.1in}
\item in the top left region, it assigns $q=(*,1)$ to any point in the region, where $*$ is a variable. Thus this region represents a set of menu items, each of which is a vertical slice.
    \vspace{-0.1in}
\item Symmetrically, in the bottom right region, it assigns $q=(1,*)$ to any point in the region. This region represents a set menu items, each of which is a horizontal slice.
    \vspace{-0.1in}
\item The boundary between the top left and right regions is vertical ($QL$ in both figures); the boundary of the top right and bottom right regions is horizontal ($MN$ in Fig.~\ref{fig:example1} or $LI$ in Fig.~\ref{fig:example2add}). The boundary between $(1,*)$ region and $(*,1)$ region is in the upper right direction.
\end{enumerate}
\label{lemma:pre}
\end{lemma}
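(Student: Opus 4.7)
The plan is to use Pavlov's trichotomy to produce the four cells of the partition, and then pin down the geometry of each by exploiting the fact that $u$ is convex on $V$ with partial derivatives $(u_x, u_y) = (q_1, q_2) \in [0,1]^2$. First, by Pavlov's Proposition~2 quoted above, at every $(x,y)$ one has either $q = (0,0)$, $q_1 = 1$, or $q_2 = 1$. Define $R_{00}$, $R_{10}$, $R_{01}$ by which of these three conditions holds, and set $R_{11} = R_{10} \cap R_{01}$; these cells cover $V$.

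Next I would identify $R_{00}$ with $\{u = 0\}$ and establish its convexity. On any menu item with $q = (0,0)$, IR forces $u = -t \geq 0$, while optimality forces $t \geq 0$ (otherwise one could raise that item's payment and strictly increase revenue), so $u = t = 0$; combined with Lemma~\ref{lemma:revenuemonotone1} this yields $R_{00} = \{u = 0\}$. Since $u$ is convex and non-negative, its zero set is convex, which gives the convex region $ASME$.

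The two ``band'' regions come next. In $R_{01}$ we have $u_y \equiv 1$, so along every vertical line $u$ is affine in $y$ with unit slope, and differentiating in $x$ shows $u_x = q_1$ is independent of $y$ throughout $R_{01}$. This gives the vertical-slice menu structure of $R_{01}$. By convexity, $q_1$ restricted to $R_{01}$ is non-decreasing in $x$ and capped at $1$, so $\{q_1 = 1\} \cap R_{01}$ is a right half-interval of $x$-values and the interface $R_{01} \leftrightarrow R_{11}$ is a vertical segment, namely $QL$. A symmetric argument yields the horizontal-slice structure of $R_{10}$ and the horizontal interface $MN$ (or $LI$).

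Finally I would derive monotonicity of the $R_{01} \leftrightarrow R_{10}$ interface, which I expect to be the main obstacle. For $p_1 \in R_{01}$ using menu $(q_1(x_1), 1, t_{01}(x_1))$ and $p_2 \in R_{10}$ using menu $(1, q_2(y_2), t_{10}(y_2))$, adding the two IC inequalities preventing each type from mimicking the other yields $(x_2 - x_1)(1 - q_1(x_1)) \geq (y_1 - y_2)(1 - q_2(y_2))$; since $1 - q_1(x_1)$ and $1 - q_2(y_2)$ are strictly positive, this, together with the symmetric inequality obtained by swapping the roles of $p_1$ and $p_2$, forces a monotone relation between the $x$- and $y$-coordinates of paired interface points, matching the upper-right orientation in the lemma. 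The subtlety is that the interface is a curve on which $u$ has a crease, so $\nabla u$ is only a subgradient there and one must work with menu-item IC (support planes of the convex surface) rather than pointwise partial derivatives; my plan is precisely to carry out the monotonicity argument at the level of IC between the two menu families, sidestepping any appeal to differentiability at the kink.
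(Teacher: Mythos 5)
Your proposal follows essentially the same route as the paper: invoke Pavlov's trichotomy to produce the four cells, use convexity of $u$ to identify the bottom-left region with the convex zero set $\{u=0\}$, exploit $u_y\equiv 1$ (resp.\ $u_x\equiv 1$) in the band regions to obtain the vertical/horizontal interfaces with $R_{11}$, and sum the two IC constraints between a $(*,1)$-type and a $(1,*)$-type to get the monotone interface. The only slip is a sign typo in the summed IC inequality, which should read $(x_2-x_1)(1-q_1(x_1))\geq (y_2-y_1)(1-q_2(y_2))$, but the conclusion about the upper-right orientation is unaffected.
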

\vspace{-0.3in}
\subsection{Optimal mechanisms for constant power rate}
\label{subsec:part22}

To describe our first theorem under Condition 2, we need the following condition on density functions.
\vspace{-0.1in}
\begin{displaymath}\vspace{-0.1in}
\mbox{\textbf{Condition 3:}~~} PR(f_i(x)), ~~i=1,2,~\textrm{ is constant.}
\end{displaymath}
\begin{theorem}
\label{thm:4items}
Under Conditions 2 and 3, optimal mechanism has at most 4 menu items.
\end{theorem}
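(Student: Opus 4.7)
My plan is to combine Pavlov's Graph Representation Lemma with an extreme-point argument powered by the constant-power-rate hypothesis. Lemma~\ref{lemma:pre} already partitions the optimal mechanism's type rectangle into at most four regions: a no-sale region with $q=(0,0)$, a bundle region with $q=(1,1)$, an ``item-$2$-certain'' region $R_2$ with $q=(*,1)$, and an ``item-$1$-certain'' region $R_1$ with $q=(1,*)$. The two corner regions contribute one menu item each, so proving the theorem reduces to showing that $q_1$ is constant throughout $R_2$ and, by symmetry, $q_2$ is constant throughout $R_1$.

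The first step is to encode the $R_2$-portion of the mechanism in a single one-dimensional convex function. Since $q_2\equiv 1$ on $R_2$ we have $u_y\equiv 1$, so $u(x,y)=u(x,y_C)-(y_C-y)$ and $q_1(x,y)=\partial_x u(x,y_C)$ depends only on $x$. Writing $g(x):=u(x,y_C)$ on the $x$-interval $[x_S,x_L]$ swept out by $R_2$, the $R_2$-portion of the mechanism is completely determined by the convex, $1$-Lipschitz function $g$, and constancy of $q_1$ on $R_2$ is equivalent to $g$ being affine on $[x_S,x_L]$. The boundary conditions on $g$ are read off from the adjacent regions: the interface with the bundle region forces $g'=1$ at $x_L^-$, while the shape of the no-sale region fixes $g$ and its starting slope at $x_S$.

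The second step is to use Condition~3 to linearise the total revenue in $g$. A constant power rate gives $\triangle(x,y)=Cf_1(x)f_2(y)$ with $C=3+PR(f_1)+PR(f_2)\geq 0$, so the inner term of Formula~(\ref{eq1}) becomes $-C\int\!\!\int u(x,y)f(x,y)\,dy\,dx$, linear in $u$. Splitting this integral by region and substituting $u(x,y)=g(x)-(y_C-y)$ on $R_2$, the $y$-integral along each column reduces to closed-form moments of the power density $f_2(y)=a_2y^{b_2}$. Combining with the edge-$CD$ term $\int y_C\,g(x)\,f_1(x)f_2(y_C)\,dx$ and the edge-$AC$ contribution (which depends on $g(x_A)$ affinely), the part of total revenue that depends on $g|_{[x_S,x_L]}$ collapses to a \emph{linear} functional $L[g]$.

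The final step is to maximise $L[g]$ over the convex set of $1$-Lipschitz convex functions satisfying the prescribed endpoint and slope data; a linear functional over a convex set attains its maximum at an extreme point, and in this setting the extreme points are affine, since any strictly convex $g$ admits a local convex perturbation preserving all boundary data, contradicting extremality. Hence $g$ is affine, $q_1$ is constant on $R_2$, and the symmetric argument on $R_1$ gives $q_2$ constant there, yielding at most four menu items in total. The main obstacle I anticipate is verifying linearity of $L[g]$ in the second step: one must check that after every $y$-integration and edge substitution no nonlinear coupling between $g$ and the no-sale boundary survives. This is exactly where the constant-power-rate assumption earns its keep, making $\triangle/f$ a constant and the $y$-moments of $f_2$ polynomial so that $L$ is genuinely linear rather than merely affine with curvature corrections.
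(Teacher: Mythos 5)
Your plan has a genuine gap at the step you yourself flag as the ``main obstacle.'' The claim that the revenue collapses to a \emph{linear} functional $L[g]$ of the boundary utility $g(x):=u(x,y_C)$ is false, and constant power rate does not rescue it. The nonlinearity does not come from $\triangle$; even after $\triangle/f$ becomes a constant $C$ under Condition~3, the inner term $-\int\!\!\int \triangle\, u$ must be integrated only over the region where $u>0$, and the curve $SE$ bounding that region depends on $g$. Concretely, the contribution from a column at abscissa $x$ in the $(*,1)$ region is
\[
-\,C\,f_1(x)\int_{y_C-g(x)}^{y_C}\bigl(g(x)-y_C+y\bigr)f_2(y)\,dy,
\]
whose lower limit $y_C-g(x)$ depends on $g(x)$; for uniform $f_2$ this evaluates to a term proportional to $-g(x)^2$. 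The revenue is therefore \emph{concave}, not linear, in $g$, and the ``linear functional over a convex set attains its max at an extreme point'' principle does not apply. Constant power rate tames the integrand but does nothing about the $g$-dependent domain of integration, which is exactly where the curvature comes from.

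Even if linearity held, the claim that the extreme points of the set of convex, $1$-Lipschitz functions with prescribed endpoint values and slopes are \emph{affine} is wrong: such extreme points are piecewise linear (a V-shape with one kink is an extreme point), and your perturbation argument only rules out smooth strict convexity, not kinks. The paper's actual route is different and more delicate: Lemmas~\ref{lemma:fourcanon} and~\ref{lemma:four2} compute $R_u$, the partial derivative of revenue with respect to $u(x_B,y)$, and show that under Condition~3 the function $v(l)=\int_{x_B-l}^{x_B}f_1(x)\bigl(3+PR(f_1)+PR(f_2)\bigr)dx$ is monotone, forcing $R_u$ to change sign at most once from $+$ to $-$ along each of $BN$ and $ND$. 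Together with convexity this yields \emph{two-piece} piecewise linearity on each segment; reducing from the naive three pieces on $BD$ to two requires a further comparison of intermediate utilities $u^1,u^2,u^3$ and their revenues, and the final step invokes Manelli and Vincent's theorem that a $(1,1)$ region must exist so the two top-right slopes coincide. None of that machinery appears in your sketch, and the step it replaces does not hold.
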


The result is tight because one can find instances where optimal mechanism contains exactly 4 menu items~\cite[Example 3]{Pavlov2011a}.

We prove the theorem for the case shown in Fig.~\ref{fig:example2}. The other case related to Fig.~\ref{fig:example1} follows from an almost identical proof. First, draw a horizontal line through $M$ and it intersects $BD$ at $N$.
Then draw a vertical line through $M$ crossing $CD$ at $G$. We have the following two lemmas.
\vspace{-0.1in}
\begin{lemma}
Optimal utility function on $BN$ is piecewise linear with at most two pieces.
\label{lemma:fourcanon}
\vspace{-0.05in}
\end{lemma}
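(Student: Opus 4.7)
The plan is to treat $\phi(y) := u(x_B, y)$ on $[y_A, y_M]$ as the one-dimensional convex function that fully parametrizes the bottom-right region, carry out a first-variation calculation of the revenue with respect to $\phi$, and then show that the resulting monotone structure forces the optimizer to have at most two linear pieces.

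By the Graph Representation Lemma, the bottom-right region has $q_1 \equiv 1$, so $u_x = 1$ there and $u(x, y) = (x - x_B) + \phi(y)$ throughout; its left boundary (part of curve $SME$) is forced by the zero-utility condition on the $(0, 0)$ side to be $x_{SE}(y) = x_B - \phi(y)$. Thus the mechanism on this region is fully encoded by $\phi$, subject to $\phi$ convex with $\phi'(y) \in [0, 1]$, the IR lower bound $\phi(y_A) \geq 0$, and a matching value $\phi(y_M) = x_B + y_M - c$ at the $(1, 1)$ boundary, where the bundle price $c$ is held fixed during this sub-optimization.

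Plugging these substitutions into Formula~(\ref{eq1}) and using Condition 3 in the form $\triangle(x, y) = \alpha\, f_1(x) f_2(y)$ with $\alpha := 3 + b_1 + b_2 \geq 0$, the terms of revenue that depend on $\phi$ on $BN$ are the boundary integral along $BD$, contributing $x_B f_1(x_B) \int_{y_A}^{y_M} \phi(y) f_2(y)\, dy$, and the area integral over the bottom-right region, contributing $-\alpha \int_{y_A}^{y_M} f_2(y) \int_{x_B - \phi(y)}^{x_B} [(x - x_B) + \phi(y)]\, f_1(x)\, dx\, dy$. Taking a pointwise first variation at $y_0$, the integrand $(x - x_B) + \phi(y_0)$ vanishes at the moving lower limit $x = x_B - \phi(y_0)$, so the Leibniz boundary contribution drops out, leaving
\[
\frac{\delta R}{\delta \phi(y_0)} = f_2(y_0)\Bigl[x_B f_1(x_B) - \alpha \int_{x_B - \phi(y_0)}^{x_B} f_1(x)\, dx\Bigr].
\]

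This derivative is strictly decreasing in $\phi(y_0)$, so there is a unique critical level $\phi^*$ (independent of $y_0$) at which it equals zero, and the unconstrained pointwise optimum is the constant $\phi \equiv \phi^*$. Imposing convexity, the slope bound, and the pinned value at $y_M$, the optimum collapses to one of two shapes: either $\phi$ is a single affine piece (constant, or a slope-$1$ segment) when $\phi^*$ lies outside the admissible window for $\phi(y_A)$, or $\phi$ remains at $\phi^*$ on $[y_A, y^*]$ and then rises with slope $1$ to $\phi(y_M)$ on $[y^*, y_M]$, which is two affine pieces. Any intermediate slope in $(0, 1)$ on a subinterval can be replaced by a slope-$0$/slope-$1$ combination with the same endpoints that weakly improves the objective, by the monotonicity of $\delta R/\delta \phi(y_0)$ in $\phi(y_0)$. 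The main obstacle I anticipate is the rigorous handling of the moving left boundary $x_{SE}(y) = x_B - \phi(y)$ in the first-variation step; the key observation that unlocks it is the cancellation of the Leibniz boundary term via the zero-utility identity, after which the remainder of the argument reduces to a clean one-dimensional bang-bang optimization.
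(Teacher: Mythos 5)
Your overall strategy---parametrize the region by $\phi(y) = u(x_B,y)$, compute the revenue's first variation in $\phi$, and use Condition~3 to show the variational derivative has a single sign change in $\phi$, giving a bang-bang solution---is the same strategy the paper uses, and your computed derivative $f_2(y_0)\bigl[x_B f_1(x_B) - \alpha \int_{x_B-\phi(y_0)}^{x_B} f_1(x)\,dx\bigr]$ matches the paper's $R_u$. The Leibniz-cancellation observation at the moving boundary is also correct.

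However, there is a real accounting gap. You claim that the only $\phi$-dependent contributions are the $BD$ boundary integral and the area integral over the bottom-right region, but the Green's-theorem boundary term along segment $EB$ (the piece of the bottom edge $y=y_A$ inside the $(1,*)$ region), namely $-\,y_A f_2(y_A)\int_{x_B-\phi(y_A)}^{x_B}\bigl[(x-x_B)+\phi(y_A)\bigr]f_1(x)\,dx$, depends on $\phi(y_A)$. Since you let $\phi(y_A)$ float freely (subject only to $\phi(y_A)\geq 0$), this term belongs in your functional and shifts the optimal $\phi(y_A)$ away from your claimed level $\phi^*$; your conclusion that the first piece sits exactly at $\phi^*$ is therefore not justified. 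The paper sidesteps exactly this by fixing $u(B)$ (equivalently $\phi(y_A)$) together with $u(K)$ and $u(N)$, so that the $EB$ contribution is a constant during the variation, and then concludes only the \emph{shape} (max of two affine pieces), not the precise level. Relatedly, your final ``any intermediate slope can be replaced by a slope-$0$/slope-$1$ combination'' step is asserted rather than argued: when $\phi<\phi^*$, that replacement \emph{lowers} $\phi$ and hence hurts, and the correct move there is the chord, not a bang-bang split. Fixing these two points---treat $\phi(y_A)$ as fixed (or include the $EB$ term) and spell out the replacement argument separately on the sublevel and superlevel sets of $\phi^*$---would close the argument.
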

\vspace{-0.05in}
\begin{figure}
\setlength{\abovecaptionskip}{-0.4cm}
\setlength{\belowcaptionskip}{-0.5cm}
  \centering
  \includegraphics[width=6cm]{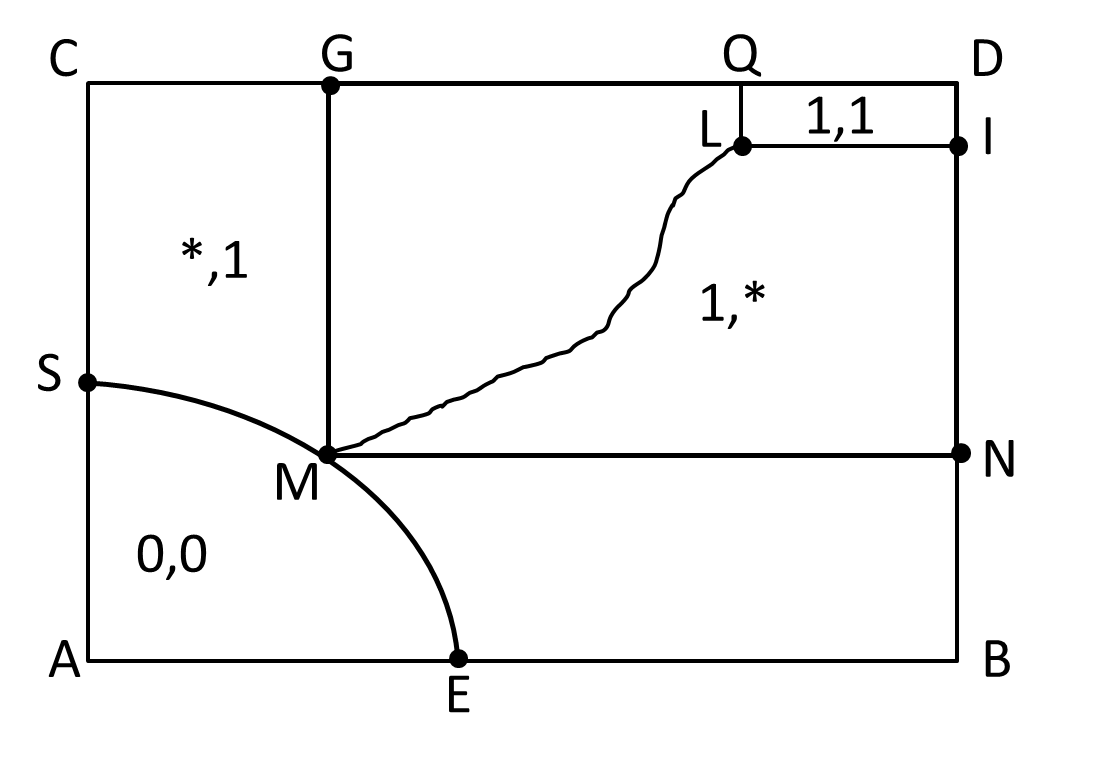}\\
  \caption{The optimal allocation that there is no point on curve SME chooses allocation menu (1,1).}
  \label{fig:example2}
\end{figure}
\begin{lemma}
The optimal utility function on $ND$ is piecewise linear with at most 2 pieces.
\label{lemma:four2}
\end{lemma}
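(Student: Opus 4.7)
The plan is to analyze the utility $v(y) := u(x_B, y)$ along the segment $ND$ and show, via a variational argument based on Formula~(\ref{eq1}), that $v$ can be taken to be piecewise linear with at most two pieces.

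First, I would identify which allocation regions the segment $ND$ passes through. By the Graph Representation Lemma, on the right edge $x = x_B$ and above the height of $M$, any type lies either in the $(1, *)$ region (inherited from the bottom-right region, whose menu items are horizontal slices) or in the $(1, 1)$ region (the top-right region, which contains a single menu item). Correspondingly $v'(y) = q_2(y) \in [0, 1]$ in the former case and $v'(y) = 1$ in the latter. Since $u$ is convex, $v$ is a non-decreasing convex function of $y$, so the transition from $(1, *)$ slopes to the $(1, 1)$ slope can happen at most once as $y$ increases.

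Second, I would isolate the dependence of the revenue on $v|_{ND}$. From Formula~(\ref{eq1}) the $BD$-edge term contributes the positive linear functional $x_B f_1(x_B) \int_{y_N}^{y_C} v(y) f_2(y)\,dy$, while the other boundary terms do not involve $v|_{ND}$. The interior term depends on $v$ only indirectly: because the $(1, *)$ region is built out of horizontal-slice menu items, the interior value $u(x, y)$ on a given horizontal band is determined linearly by the corresponding $q_2$ and $t$, which in turn are pinned down by $v$ on $ND$.

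Third, the key step is to argue that among convex $v$ on $[y_N, y_C]$ with fixed endpoint values (set by the optimal utility outside $ND$), slope bound $v' \leq 1$, and consistent with the horizontal-slice structure on the adjacent interior, the revenue functional is maximized at an extreme point of this feasible set and consists of at most two linear pieces. Concretely, inserting an extra interior breakpoint into $v$ amounts to splitting a horizontal slice in the $(1, *)$ region into two menu items with distinct $q_2$ values. A direct calculation using $\triangle(x, y) \geq 0$ (from Condition~2, so the inner term enters Formula~(\ref{eq1}) with a negative sign) together with the positivity of the $BD$-edge coefficient shows that such a split is weakly suboptimal, while merging adjacent breakpoints is weakly revenue-improving, until only the single slope change from some $q_2^{*} < 1$ to $1$ survives.

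The main obstacle I expect is the third step, specifically making rigorous the claim that inserting extra breakpoints into $v$ cannot strictly increase revenue. This requires carefully tracking how the interior utility in the $(1, *)$ band adjacent to $ND$ changes when $v$ is perturbed, and then balancing the positive contribution from the $BD$-edge term against the negative contribution from $-\int u \triangle$. I anticipate that the bookkeeping will closely parallel the argument used for Lemma~\ref{lemma:fourcanon} on $BN$, with the $(1, *)$/$(1, 1)$ transition on $ND$ playing the role that the $(0, 0)$/$(1, *)$ transition plays on $BN$, so the two lemmas can share the bulk of the extremal-point analysis.
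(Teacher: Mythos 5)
Your overall decomposition (isolate the revenue's dependence on $v(y)=u(x_B,y)$ via Formula~(\ref{eq1}), then run a variational argument) matches the paper, and you are right that the optimization has the same \emph{shape} as in Lemma~\ref{lemma:fourcanon}: locate a single sign flip of the marginal revenue $R_u$ along $ND$, then appeal to convexity with fixed values at the flip point and the endpoints. But the step you flag as ``the main obstacle'' is precisely where your sketch diverges from what is actually needed, and the analogy you propose would not go through as stated.

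First, the paper does not argue that ``inserting an extra breakpoint is weakly suboptimal'' or appeal to extreme points of a constraint set; it shows directly that $R_u(u(x_B,y),y)$ changes sign at most once, from negative to positive, as $y$ increases. This is not a simple consequence of $\triangle\geq0$ plus positivity of the edge coefficient: those two effects genuinely compete, and the sign of the net marginal depends on $u$. The paper resolves this by writing the interior utility on a horizontal slice at height $y_P$ in terms of $u(R)$ and $q_1(\cdot)$ (in the $(*,1)$ part) and slope~$1$ (in the $(1,*)$ part), differentiating $R$ with respect to the boundary position $x_J$ to get $\partial R/\partial x_J=(q_1(x_J)-1)f_2(y_J)\,v(x_J)$, then using $q_1(x_J)-1\leq0$ together with $\partial u/\partial x_J\leq0$ and the chain rule to conclude $\operatorname{sgn}(R_u)=\operatorname{sgn}(v(x_J))$. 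Only then do Conditions~2 and~3 (monotonicity of $v(\cdot)$) plus part~(5) of the Graph Representation Lemma (the $(*,1)$/$(1,*)$ boundary travels to the upper right, so $x_J$ increases in $y$) yield the single sign flip. This chain-rule step has no counterpart in your sketch, and it is not ``bookkeeping'': it is the substantive reason the $ND$ analysis is harder than the $BN$ one, where the slice extent is literally $u(x_B,y)$ because the slope in $x$ is $1$ all the way to the zero region.

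Second, the transition you single out is the wrong one. You identify the relevant breakpoint with the $(1,*)$/$(1,1)$ interface on $ND$. The structural driver in the paper is the $(*,1)$/$(1,*)$ boundary (curve $MLQ$), via the quantity $x_J$, not the $(1,*)$/$(1,1)$ boundary. The resulting two pieces of $u(ND)$ are the line through $(y_N,u(N))$ with slope $q_2(N)$ and the line through $(y_P,u(P))$ and $(y_D,u(D))$; identifying the upper slope with $1$ (and hence with the $(1,1)$ region) happens later, in the proof of Theorem~\ref{thm:4items} using~\citet[Theorem~16]{Manelli2007}, not inside Lemma~\ref{lemma:four2}. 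As written, your proposal would need this additional chain-rule/monotonicity machinery to close the gap, and the $(1,*)$/$(1,1)$ framing should be replaced by the $(*,1)$/$(1,*)$ boundary.
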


With these two lemmas, we are able to prove Theorem~\ref{thm:4items} (in the appendix).
 \citet[Theorem 1 and Lemma 14]{Hart2012a} state that bundling 4-approximates optimal revenue for general two-item setting. As an application of Theorem~\ref{thm:4items}, we obtain a better lower bound for bundling sale.
\begin{corollary}
Under Conditions 2 and 3, bundling 3-approximates optimal revenue.
\end{corollary}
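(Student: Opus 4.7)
The plan is to leverage Theorem~\ref{thm:4items} together with a direct comparison between each positive-revenue menu item and a suitably chosen bundling price. First, I would invoke Pavlov's characterization embodied in Lemma~\ref{lemma:graph}: under Condition~2 every menu item of the optimal mechanism has allocation vector in $\{(0,0),(q,1),(1,q),(1,1):q\in[0,1]\}$, and the no-sale item $(0,0,0)$ corresponding to region $ASME$ is always among the at most four menu items guaranteed by Theorem~\ref{thm:4items}. Hence the optimal mechanism contains at most three menu items with strictly positive payment; label their revenue contributions $R_1,R_2,R_3$, so that the optimal revenue decomposes as $R=R_1+R_2+R_3$ (with $R_i$ equal to the payment $t_i$ times the mass of buyer types choosing item $i$).

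The central observation is that, for any menu item $(q_1,q_2,t)$ with $t>0$, simply posting a bundle at the same price $t$ already extracts at least $R_i$. Indeed, any buyer type $(x,y)$ that selects this menu item must satisfy IR, so $q_1 x+q_2 y\geq t$; since $q_1,q_2\leq 1$, this immediately implies $x+y\geq t$. Therefore every buyer that originally chose this menu item would also purchase a bundle priced at $t$, yielding
$$t\cdot\Pr[x+y\geq t]\;\geq\;t\cdot\Pr[(x,y)\text{ selects menu item }i]\;=\;R_i.$$

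Pigeonhole then gives $\max_i R_i\geq R/3$; setting the bundle price to the price $t_i$ of the menu item attaining this maximum yields bundling revenue at least $R/3$, and the supremum over bundle prices can only be larger. This establishes the 3-approximation.

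No step should be a substantive obstacle once Theorem~\ref{thm:4items} is in hand. The one point requiring care is the bookkeeping: one must confirm that the $(0,0)$ allocation is always counted among the at most four menu items, i.e.\ that the region $ASME$ is non-degenerate in the optimal allocation, which follows from Lemma~\ref{lemma:revenuemonotone1} combined with Pavlov's characterization in Lemma~\ref{lemma:graph}. Without this, the same pigeonhole argument would only deliver a $4$-approximation, so this is the one place where attention is required.
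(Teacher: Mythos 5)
Your proof is correct and follows essentially the same approach as the paper: decompose the optimal revenue over the at most three non-zero menu items guaranteed by Theorem~\ref{thm:4items}, observe that bundling at the price of each such item already captures at least that item's revenue contribution (you do this directly via the IR inequality $q_1x+q_2y\geq t \Rightarrow x+y\geq t$; the paper routes it through single-nonzero-item mechanisms and the fact that bundling is optimal among those, which rests on the same observation), and conclude by pigeonhole. Your caveat about confirming $(0,0,0)$ is among the four items is well-placed and correctly resolved.
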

\begin{proof} 
Revenue of an optimal mechanism with 3 non-zero menu items is less than or equal to the sum of revenues of 3 mechanisms, each of which has only 1 non-zero menu items. Since bundling is optimal among all mechanisms that contains only 1 non-zero menu item, thus no worse than any of these three mechanisms. Consequently bundling gives a 3-approximation of the optimal revenue.\vspace{-0.1in}
\end{proof}

Following a similar proof of Theorem~\ref{thm:4items}, we obtain another condition under which 3 menu items are enough. Note that, this condition does not impose constant power rate, thus is not a special case of Condition 3.
\begin{displaymath}
\mbox{\textbf{Condition 4:}~~} -2\leq PR(f_1(x))\leq y_Af_2(y_A)-2, \forall x\textrm{ and } -2\leq PR(f_2(y))\leq x_Af_1(x_A)-2,\forall y.
\end{displaymath}
\begin{theorem}
Under Conditions 2 and 4, there is an optimal mechanism such that it contains at most 3 menu items, thus bundling gives a 2-approximation of the optimal revenue.
\label{theorem:6}
\end{theorem}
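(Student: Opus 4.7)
The plan is to adapt the proof strategy of Theorem \ref{thm:4items}, refining the boundary analysis so that Condition 4 upgrades the ``at most two linear pieces'' conclusion of Lemmas \ref{lemma:fourcanon} and \ref{lemma:four2} to an ``at most one linear piece'' conclusion on the relevant boundary segments. I would first invoke Lemma \ref{lemma:graph} to decompose the optimal mechanism into the four regions $ASME$, $(*,1)$, $(1,1)$, and $(1,*)$ as in Figures \ref{fig:example1} and \ref{fig:example2add}, and then count menu items by counting distinct $*$-values, i.e., by counting linear pieces of the optimal utility along the top and right boundary segments in each region.

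For the strengthened boundary analysis, I would use the revenue formula (\ref{eq1}) localized to a perturbation of $u$ concentrated near a candidate kink point on segment $BN$ (and symmetrically on $ND$). Any such perturbation affects revenue through two competing terms: a positive ``edge'' contribution from the $y_A u(x,y_A) f_1(x) f_2(y_A)$ term, and a negative ``inner'' contribution from $\int u(x,y) \triangle(x,y)\, dy$. Under Condition 4, the upper bound $PR(f_1(x)) \leq y_A f_2(y_A) - 2$ can be rewritten as $x f_1'(x) + 2 f_1(x) \leq y_A f_2(y_A) f_1(x)$, which controls $\triangle(x,y)$ in a way that makes the inner-loss term strictly dominate the edge-gain term for any non-linear convex perturbation of $u$ on the boundary. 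This rules out kinks in the optimal $u$ restricted to $BN$; by the symmetric condition $PR(f_2(y)) \leq x_A f_1(x_A) - 2$, the same holds on $ND$.

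With a single linear piece on each of $BN$ and $ND$, the $(*,1)$ region contributes exactly one menu item and the $(1,*)$ region contributes exactly one menu item. Together with the zero menu item from region $ASME$, this already gives three menu items; one must then show that the $(1,1)$ region either is empty or that its menu item coincides with the extremal limit of the linear pieces on the top and right boundaries, so no additional distinct menu item is introduced. The 2-approximation corollary for bundling follows exactly as in the corollary to Theorem \ref{thm:4items}: with at most two non-trivial menu items, each generating no more revenue than its corresponding single-menu mechanism and bundling being optimal among all single-menu-item mechanisms, bundling is within a factor 2 of the optimum.

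The main obstacle I anticipate is the precise algebraic verification in the strengthened versions of Lemmas \ref{lemma:fourcanon} and \ref{lemma:four2}, namely confirming that the boundary coefficient $y_A f_2(y_A)$ appears with exactly the magnitude predicted by Condition 4 when one balances the marginal edge gain against the marginal inner loss at a candidate kink. The auxiliary care needed is to design the test perturbation so it preserves convexity and the derivative bounds in $[0,1]$, and to verify that collapsing from two boundary pieces to one is consistent with the geometry of Lemma \ref{lemma:graph} rather than requiring a subtle re-partition of the four regions.
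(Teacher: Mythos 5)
There is a genuine gap, and the central inequality is also backwards. The paper's proof of Theorem~\ref{theorem:6} splits into two cases according to whether a $(*,1)$ region is present. In the case where it is present, the algebraic consequence of Condition~4 is that the \emph{edge} term dominates, not the inner term: one computes $x_Bf_1(x_B)-\int_{x_A}^{x_B}f_1(x)[3+PR(f_1(x))+PR(f_2(y))]\,dx = x_Af_1(x_A)-2-PR(f_2(y))\geq 0$, so $R_u(u(x_B,y),y)\geq 0$ everywhere on $BD$ and the revenue is weakly \emph{increasing} in the boundary utility. Your proposal claims the opposite (``the inner-loss term strictly dominate[s] the edge-gain term''), and even if it were true it would not deliver a single linear piece: if revenue were decreasing in boundary utility, one would minimize a convex function with fixed endpoints and endpoint-slope constraints, and the minimizer is piecewise linear with \emph{two} pieces, not one. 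The single-piece conclusion comes precisely from the fact that maximizing a convex function over a chord with fixed endpoints forces the chord itself. You also cite the wrong edge term when perturbing on $BN$: that segment lies on $BD$, whose contribution is $\int x_B u(x_B,y)f_1(x_B)f_2(y)\,dy$, not the bottom-edge term $y_Au(x,y_A)f_1(x)f_2(y_A)$, and the relevant half of Condition~4 on $BD$ is $PR(f_2(y))\leq x_Af_1(x_A)-2$.

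Beyond the sign and the edge identification, the proposal does not address the obstruction that makes the paper's proof nontrivial: pushing $u(BD)$ up to its convex maximum changes the geometry of the regions and, in particular, can alter $u(AC)$, whose edge term enters revenue with a \emph{negative} sign. The paper handles this by pivoting at the point $P$ on $BD$ above $S$ (keeping $u(AC)$ fixed) and then performing a second modification of $u(CD)$ that collapses the $(*,1)$ boundary to the segment $SD$ without disturbing $u(AB)$, before invoking Manelli--Vincent to identify the two surviving menu items in $CSPD$ with $(1,1,t)$. A separate argument is needed in the case where there is no $(*,1)$ region at all, since then $q_1\equiv 1$ on the non-zero region, $u(x_B,y)=u(x_A,y)+x_B-x_A$, and the $BD$ and $AC$ contributions must be combined in one revenue functional; there one gets $R_u=-[2+PR(f_2(y))]f_2(y)\leq 0$ and \emph{minimizes} the boundary utility instead. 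Finally, the paper establishes the weaker claim that \emph{some} optimal mechanism has at most $3$ items, and the resulting menu $\{(0,0,0),(1,1,t),(1,\alpha,t_1)\}$ is asymmetric; your symmetric count (one item from $(*,1)$, one from $(1,*)$, plus the zero item, plus a possible $(1,1)$ item to be argued away) does not match the construction and would need its own justification.
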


\subsection{Optimal mechanisms for monotone power rate}
\label{subsec:part24}

The requirement of power rate to be constant might be restrictive. If one relaxes this requirement to be \emph{monotone power rate}, one only needs to add two more menu items.
\vspace{-0.1in}
\begin{displaymath}
\mbox{\textbf{Condition 5:}~~} PR(f_i(x)), i=1,2, \textrm{is weakly monotone.}
\end{displaymath}
\begin{theorem}
\vspace{-0.1in}
Under Conditions 2 and 5, if $f_1=f_2$, optimal mechanism consists of at most 6 menu items.
\label{theorem:2}
\end{theorem}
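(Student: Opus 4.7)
The plan is to follow the strategy of Theorem~\ref{thm:4items} and upgrade the boundary-utility analysis from constant to monotone power rate, using $f_1=f_2$ to invoke symmetry. By Lemma~\ref{lemma:graph}, the optimal mechanism partitions $V=ABDC$ into four regions: a $(0,0)$ region, a $(1,1)$ region, a $(1,*)$ region, and a $(*,1)$ region. The first two regions each contribute exactly one menu item. Since $f_1=f_2$, one can restrict to symmetric mechanisms~\citep{Maskin1984}, as was used to derive the bundling corollary after Theorem~\ref{theorem:monotone}; this identifies the $(1,*)$ and $(*,1)$ regions under reflection across the diagonal. It therefore suffices to show that the $(1,*)$ region contains at most two distinct menu items, yielding a total of $1+1+2+2=6$.

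To count menu items in the $(1,*)$ region, I would analyze the optimal utility $u$ restricted to the right edge $BD$. A menu item in this region corresponds to a horizontal slice on which $q_2=u_y(x_B,y)$ is constant, so the number of such items equals the number of distinct slopes in the piecewise linear decomposition of $u(x_B,\cdot)$. Following the strategy behind Lemmas~\ref{lemma:fourcanon} and~\ref{lemma:four2}, I would fix the optimal utilities on the other three edges and in the interior, then rewrite the portion of Formula~(\ref{eq1}) that depends on $u(x_B,\cdot)$ as a linear functional on this edge. The functional is optimized over the convex compact set of convex non-decreasing functions with slope in $[0,1]$ and pinned endpoints; the optimum is attained at an extreme point, which is piecewise linear.

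The main obstacle, and the essential upgrade from Theorem~\ref{thm:4items}, is to bound the number of linear pieces of the optimal $u(x_B,\cdot)$ by three---one flat $u=0$ piece on the portion of $BD$ lying in the $(0,0)$ region, plus at most two nontrivial linear pieces in the $(1,*)$ region---instead of two under Condition~3. Under Condition~3 the effective weight derived from $\triangle$ along the edge is a power function of $y$ of fixed sign, which forces only one nontrivial linear piece. Under Condition~5 the weight is a weakly monotone function of $y$, because $3+PR(f_1(x_B))+PR(f_2(y))$ is weakly monotone in $y$, and therefore changes sign at most once on $BD$. A sign-change argument in the style of~\citet{Pavlov2011a} then yields at most two nontrivial linear pieces, hence at most two menu items in the $(1,*)$ region.

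Combining with the symmetric analysis for the $(*,1)$ region along $CD$, the four regions contribute at most $1+1+2+2=6$ menu items in total. The assumption $f_1=f_2$ enters both in the symmetry reduction that matches the piece count on $BD$ with that on $CD$, and in ensuring that the restriction to symmetric mechanisms is without loss. The routine verifications---that the extreme-point characterization applies in this setting and that the sign-change bound on the monotone weight indeed yields the claimed three pieces---should follow standard convex-analysis arguments already implicit in the earlier proofs, though the exact accounting of edge and interior contributions in Formula~(\ref{eq1}) under a weight with one sign change is where I expect the bulk of the technical work to lie.
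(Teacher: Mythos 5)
Your high-level plan — reduce to symmetric mechanisms, use Lemma~\ref{lemma:graph}'s four-region decomposition, and bound the number of linear pieces of $u$ on one boundary edge — does match the paper's strategy, and the final accounting $1+1+2+2=6$ is the same. But there is a real gap in the technical step you label as ``the bulk of the work,'' and the way you propose to close it does not match what actually has to be shown.

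You claim that the relevant weight along edge $BD$ is $3+PR(f_1(x_B))+PR(f_2(y))$, which is weakly monotone in $y$ under Condition~5 and hence changes sign at most once, yielding at most two nontrivial linear pieces. This is not the weight that appears when one isolates the dependence of $R$ on the edge utility. Fixing the utility on the other edges and rewriting Formula~(\ref{eq1}), the coefficient multiplying the edge allocation is an integral quantity such as $v(x)=\int_x^{x_D}\bigl[x_Bf_1(x_B)f_1(l)-\int_l^{x_D}\triangle(l,y)\,dy\bigr]dl$ or $R_u(u(x_B,y),y)=f_2(y)\bigl[x_Bf_1(x_B)-\int_{x_B-u(x_B,y)}^{x_B}f_1(x)\bigl(3+PR(f_1(x))+PR(f_2(y))\bigr)dx\bigr]$, and Condition~5 by itself does not make either of these change sign only once (the integrals reintroduce dependence on the interior of the rectangle and on $u$ itself). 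The paper handles this by splitting the edge $CD$ at an intermediate point $N$ (the projection of the corner $M$ of the zero region) and proving two different lemmas with two different techniques: on $CN$ (Lemma~\ref{lemma:monotonePR3}), a sign-change argument does apply, because the relevant derivative involves $3+PR(f_1(x_K))+PR(f_1(y))$ together with $u(x_K,y_C)$ and both are monotone in $x_K$ — this is precisely where Condition~5 is used; but on $ND$ (Lemma~\ref{lemma:monotonePR2}), the weight $v(x)$ is allowed to change sign arbitrarily many times, and the bound of two pieces is instead obtained by a rearrangement/extreme-point argument on the monotone allocation $q_1(\cdot)$, which maximizes $\int q_1(x)v(x)\,dx$ by choosing a step function at $x'=\arg\max_s\int_s^{x_D}v(x)\,dx$. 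You need both mechanisms; the single sign-change story covers only one of the two sub-segments. Finally, after obtaining at most two pieces on each sub-segment, there is still a nontrivial step (carried out with Fig.~\ref{fig:example5}) to argue that the pieces of $CN$ and $ND$ fit together into at most three pieces on $CD$ while respecting convexity and the fixed break values; your proposal leaves this to ``routine verification,'' but it is part of the actual content of the proof.
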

\vspace{-0.1in}
The general form of optimal mechanism is shown in Fig~\ref{fig:example4}.
It is without loss to restrict attention on symmetric mechanisms~\cite[section 1]{Maskin1984}. Let $AD$ intersects $SE$ at point $M$. In region $ASME$, seller keeps both items.
Item 2 is sold deterministically in $CSMD$ and item 1 is sold determinately in $MEBD$.
Let the allocation rule on point $(x,y)$ in $CSMD$ be $(q_1(x),1)$. Similar to the proof of Theorem~\ref{thm:4items}, we start with the following lemma.
\vspace{-0.1in}
\begin{lemma}
Optimal utility function on $ND$ is piecewise linear with at most 2 pieces.
\label{lemma:monotonePR2}
\end{lemma}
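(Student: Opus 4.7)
The plan is to proceed along the lines of Lemma~\ref{lemma:four2} from the constant-power-rate case, replacing its exact-integration argument by a sign-monotonicity argument that exploits the weaker hypothesis that $PR(f_2)$ is only weakly monotone. By the graph representation lemma, the edge $ND \subset BD$ lies inside the region $MEBD$ where $q_1 = 1$, so the restriction $y \mapsto u(x_B, y)$ is convex and piecewise linear with slopes $u_y = q_2(x_B, y) \in [0, 1]$. Suppose for contradiction it has $k \ge 3$ linear pieces, corresponding to distinct menu items $(1, q_2^{(i)}, t^{(i)})$ active on subintervals $[y_i, y_{i+1}]$ with $y_M = y_0 < y_1 < \cdots < y_k = y_C$ and $0 \le q_2^{(0)} < q_2^{(1)} < \cdots < q_2^{(k-1)} \le 1$.

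Next, I would isolate the revenue contribution of an intermediate menu item $(1, q_2^{(i)}, t^{(i)})$ with $1 \le i \le k-2$. By Pavlov's characterization this item is also selected on a horizontal strip in the interior of $MEBD$, bounded by the indifference lines with its two neighbors. Using revenue formula~(\ref{eq1}), I would express the total revenue $R$ as a function of the single parameter $y_i$, with all other breakpoints, slopes, and boundary utilities held fixed, and then differentiate. Writing $\Delta(x,y) = f_1(x) f_2(y) \bigl(3 + PR(f_1(x)) + PR(f_2(y))\bigr)$ and combining the edge contribution along $BD$ with the interior integral, the derivative should reduce to
\begin{equation*}
\frac{\partial R}{\partial y_i} = \bigl(q_2^{(i)} - q_2^{(i-1)}\bigr)\, f_2(y_i)\, \Phi(y_i),
\end{equation*}
where $\Phi(y_i)$ is a one-dimensional integral along the horizontal line $y = y_i$ and the prefactor is strictly positive.

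Monotonicity of $PR(f_2)$ should then imply that $\Phi$ has at most one sign change on $[y_{i-1}, y_{i+1}]$, so $\partial R/\partial y_i$ is itself monotone on this interval. Consequently $R$, restricted to this one-parameter family, has no strict interior maximum in $y_i$; optimality therefore forces $y_i \in \{y_{i-1}, y_{i+1}\}$, collapsing two adjacent linear pieces into one. Applying this reduction to every intermediate breakpoint shows the optimum retains at most two distinct pieces on $ND$, contradicting $k \ge 3$.

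The main obstacle will be rigorously establishing the sign-monotonicity of $\Phi$ in the third step. Unlike the constant-power-rate case (where the integrand factorizes cleanly and yields an exact first-order identity), here one must track how the $PR(f_2(y_i))$ factor varies with $y_i$ and combine the weak monotonicity hypothesis with the positive sign of $q_2^{(i)} - q_2^{(i-1)}$ to rule out multiple sign changes of $\Phi$. Both directions of monotonicity (increasing or decreasing $PR(f_2)$) must be handled uniformly. I expect the symmetric version of this argument, applied to the other portion of $BD$ and, via the $f_1 = f_2$ symmetry, to the edge $CD$, to underlie the remaining piecewise-linearity bounds that together yield the six-item conclusion of Theorem~\ref{theorem:2}.
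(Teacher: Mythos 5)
Your perturbation-of-breakpoints approach is genuinely different from the paper's proof, and in fact several of its steps are underdeveloped in ways that the paper's route sidesteps entirely.

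The paper does not move breakpoints at all. Instead, it applies Green's theorem on the region $NMGD$, uses the fact that $u=0$ on the $(0,0)$ boundary through $M$ and that $u(x,y)=y-y_M+\int_{x_M}^x q_1(l)\,dl$ inside the $(*,1)$ region, and reduces the revenue contribution to a \emph{linear} functional of the allocation $q_1$:
$$R_{NMGD}=2\int_{x_N}^{x_D} q_1(x)\,v(x)\,dx + C_1,$$
where $v$ is computable from the density and $C_1$ is a constant once $y_M$ is fixed. The lemma then follows from an Abel-summation/rearrangement argument: among all weakly increasing $q_1:[x_N,x_D]\to[q_1(x_N),1]$, a linear functional is maximized at an extreme point, and the computation shows that extreme point is a one-jump step function ($q_1\equiv q_1(x_N)$ on $[x_N,x')$ and $q_1\equiv 1$ on $[x',x_D]$). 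Notably, this argument makes \emph{no use of the weak monotonicity of $PR$}; it is valid regardless of how many times $v$ changes sign. Condition 5 enters elsewhere in Theorem~\ref{theorem:2}.

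Your approach has concrete gaps. First, the perturbation of $y_i$ with $q_2^{(i-1)},q_2^{(i)}$ held fixed is not local: it shifts $u(x_B,y)$ by $-(q_2^{(i)}-q_2^{(i-1)})\varepsilon$ for \emph{all} $y>y_i+\varepsilon$, so $\partial R/\partial y_i$ must collect contributions from the entire upper region (and, via the edge integrals in Formula~(\ref{eq1}), from $BD$ above $y_i$), not just from a one-dimensional slice at $y=y_i$. Your claimed factorization $\partial R/\partial y_i=(q_2^{(i)}-q_2^{(i-1)})f_2(y_i)\Phi(y_i)$ therefore needs to be derived, not asserted; it is plausible but far from obvious. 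Second, even granting the factorization, monotonicity of $\partial R/\partial y_i$ in $y_i$ would give you a convex or concave restriction of $R$ along this one-parameter family; to force $y_i$ to the boundary you must rule out a strictly concave interior maximum, i.e.\ you need the sign change of $\Phi$ (if any) to go from negative to positive, not just that there is at most one. Third, you must verify that the perturbed utility remains convex and admissible (slopes in $[0,1]$, nonnegativity) throughout the deformation, which the paper's parametrization by $q_1$ handles automatically. If you want to keep your outline, the fix is to switch from perturbing breakpoints to treating $q_1$ (equivalently $q_2$) on the relevant segment as the optimization variable, at which point you recover the paper's linear-functional argument and the breakpoint story becomes unnecessary.
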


\begin{figure}
\setlength{\abovecaptionskip}{-0.2cm}
\setlength{\belowcaptionskip}{-0.2cm}

  \begin{tabular}{cc}
  \begin{minipage}[t]{3.2in}
  \centering
  \includegraphics[width=5cm]{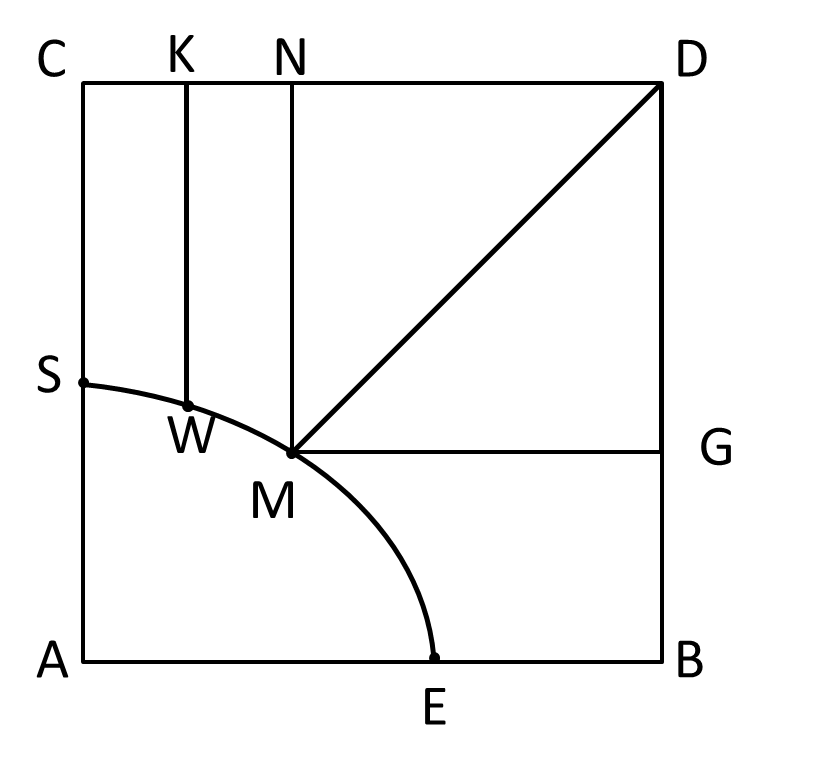}\\
  \caption{Optimal allocation under symmetric value distribution that satisfies Condition 5.}
  \label{fig:example4}
  \end{minipage}
  \begin{minipage}[t]{3.2in}
  \centering
  \includegraphics[width=6cm]{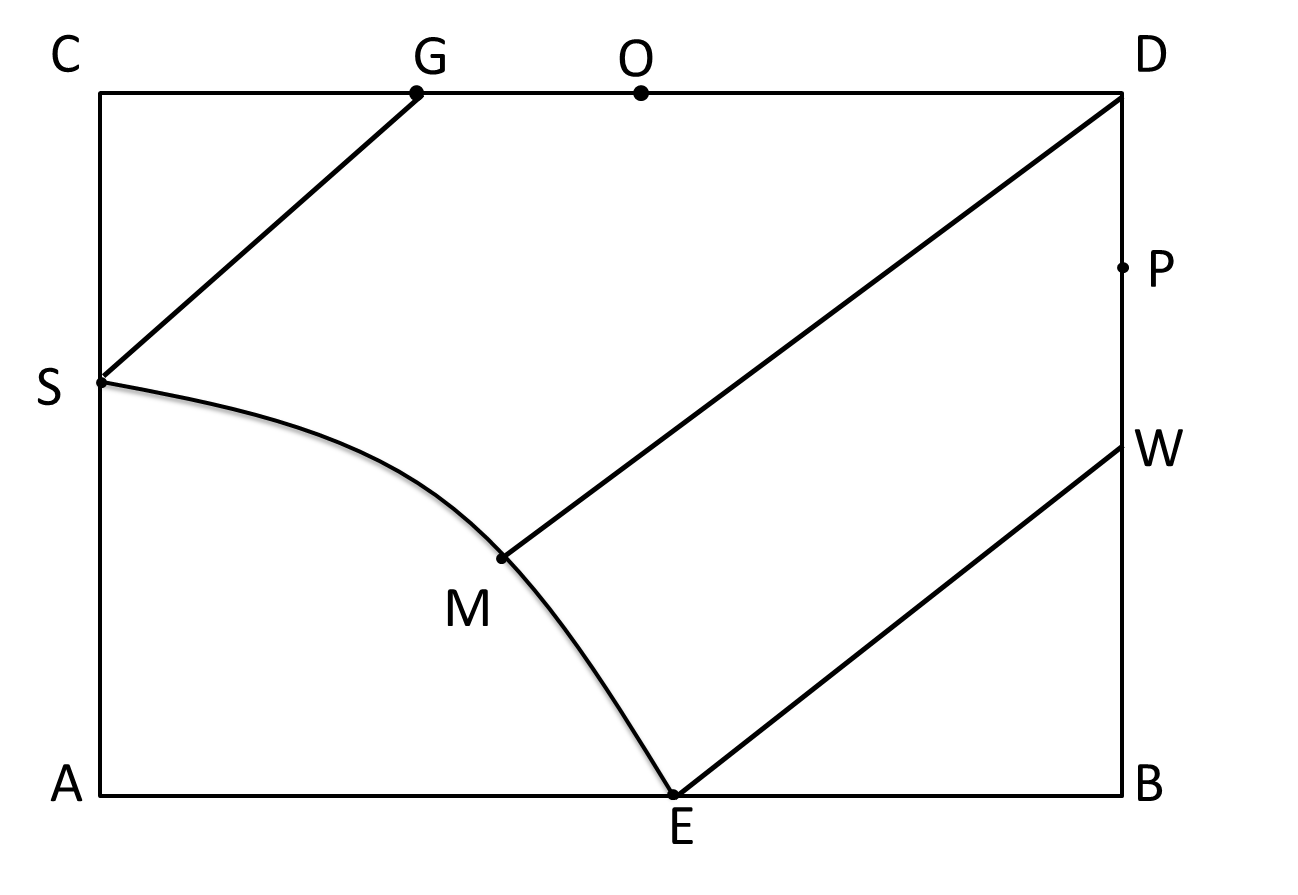}\\
  \caption{Optimal unit demand allocation.}
  \label{fig:example6}
  \end{minipage}
  \end{tabular}
\end{figure}
\vspace{-0.05in}
Similarly, we show that $u(CN)$ is piecewise linear as well.
\vspace{-0.05in}
\begin{lemma}
Optimal utility function on $CN$ is piecewise linear with at most 2 pieces.
\label{lemma:monotonePR3}
\end{lemma}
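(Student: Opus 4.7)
The plan is to mirror the proof of Lemma~\ref{lemma:monotonePR2}, now on the segment $CN$ rather than $ND$. Both segments lie in region $CSMD$, where the allocation has the form $(q_1(x),1)$ and hence $u_y\equiv q_2=1$; combined with convexity of $u$ this forces $u(x,y)=v(x)-(y_C-y)$ throughout $CSMD$, writing $v(x):=u(x,y_C)$. In particular, $v|_{CN}$ determines $u$ on the slab $S_{CN}:=\{(x,y)\in CSMD:\,x\le x_N\}$, so it suffices to study $v$ on $[x_A,x_N]$.

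Plugging this into Formula~(\ref{eq1}), the $v$-dependent part of the revenue consists of the $CN$-portion of the top-edge integral, $\int_{x_A}^{x_N} y_C\, v(x)\,f_1(x) f_2(y_C)\,dx$, and the $S_{CN}$-portion of the interior integral, $-\int_{x_A}^{x_N} v(x)\bigl(\int_{\underline y(x)}^{y_C}\triangle(x,y)\,dy\bigr)\,dx$, where $\underline y(x)$ is the lower boundary of $S_{CN}$ (the curve $SM$ and then the diagonal $MD$). A single integration by parts in $x$, using the pinned endpoint values $v(x_A)$ (from continuity with $u|_{AC}$) and $v(x_N)$ (from continuity with the already-analyzed $u|_{ND}$), yields $\mathcal R_{CN}[v]=\int_{x_A}^{x_N}K(x)\,v'(x)\,dx+\text{const}$, with an explicit weight $K$ built from $f_1$, $PR(f_1)$, $f_2$, and $\underline y(x)$. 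Admissible $v$'s are convex with $v'\in[0,1]$, so $v'$ is nondecreasing; maximizing the linear functional $\int K\,v'$ over such $v'$ is a bang-bang problem, whose optimum is pinned by the sign pattern of $K$. I will argue that, under Condition~5, weak monotonicity of $PR(f_1)$ (together with $\triangle\ge 0$ from Condition~2) forces $K$ to change sign at most once on $[x_A,x_N]$; this yields $v'$ with at most one jump, i.e., $v$ piecewise linear with at most two pieces on $CN$.

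The hard part is the single-sign-change property of $K$. Monotonicity of $PR(f_1)$ directly controls the $x$-dependence of $\triangle(x,y)=f_1(x)f_2(y)\bigl[3+PR(f_1(x))+PR(f_2(y))\bigr]$ pointwise, but the inner integral $\int_{\underline y(x)}^{y_C}\triangle(x,y)\,dy$ appearing in $K$ depends on the endogenous boundary $\underline y(x)$, which is a priori nonlinear. I must verify, using the structural constraints of Lemma~\ref{lemma:graph} (convexity of $ASME$, appropriate monotonicity of the interface curve, and the symmetry $f_1=f_2$), that this $y$-integration together with the top-edge density term $y_C f_1(x)f_2(y_C)$ does not generate additional sign changes in $K$. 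All other ingredients --- the reduction to extreme points of convex $v$ with $v'\in[0,1]$, and the reconciliation of the pinned value $v(x_N)$ with the two-piece structure of $u|_{ND}$ --- transfer essentially verbatim from the proof of Lemma~\ref{lemma:monotonePR2}, which is what the word ``Similarly'' in the statement encodes.
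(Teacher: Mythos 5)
Your overall strategy---reduce to the utility on the top edge, turn the revenue into an integral of a weight against the derivative, then argue the weight changes sign once so the optimizer is bang-bang---is the right kind of idea, but you have paired it with the wrong template. The paper explicitly models its proof of Lemma~\ref{lemma:monotonePR3} on Lemma~\ref{lemma:fourcanon}, not on Lemma~\ref{lemma:monotonePR2}, and this distinction is not cosmetic: it is forced by the geometry of the region $CSMN$. In Lemma~\ref{lemma:monotonePR2} the region $NMGD$ is bounded below by the \emph{fixed} diagonal $AD$, so the inner $y$-integral has limits independent of the utility, and the revenue really is an affine functional of the slope $q_1$; that is what makes the integration-by-parts reduction $\int q_1(x)\,v(x)\,dx + C$ legitimate there. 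In the region $CSMN$ over which you are integrating, the lower boundary along the stretch $SM$ is the zero-utility curve itself, i.e.\ $\underline y(x)=y_C-v(x)$, which is \emph{endogenous to $v$}. Concretely, the integrand at position $x$ is of the form $y_C\,v(x)\,f_1(x)f_2(y_C)-\int_{y_C-v(x)}^{y_C}\bigl(v(x)-y_C+y\bigr)\,\triangle(x,y)\,dy$, which is quadratic, not linear, in $v(x)$. Your claim that a single integration by parts produces $\int K(x)\,v'(x)\,dx+\mathrm{const}$ with a weight $K$ that is ``independent of $v'$'' (and hence that you are ``maximizing a linear functional'') is therefore false: any $K$ you write down depends on $v$ through $\underline y(x)$, and the bang-bang conclusion for linear functionals does not apply.

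The paper avoids this trap by exploiting the fact that the revenue integrand depends on $v$ only \emph{pointwise} (through $v(x)$, not through $v$ elsewhere), and studies $\frac{\partial R}{\partial u}(u(x,y_C),x)=f_1(x)\bigl[y_Cf_1(y_C)-\int_{y_C-u(x,y_C)}^{y_C}f_1(y)\,(3+PR(f_1(x))+PR(f_1(y)))\,dy\bigr]$ directly, where the $v$-dependent lower limit is correctly accounted for (the Leibniz boundary term vanishes because the integrand is zero there). The single-sign-change claim is then about this quantity as $x$ increases: $u(x,y_C)$ is nondecreasing (since $q_1\ge 0$), $PR(f_1(x))$ is nondecreasing by Condition~5, and $\triangle\ge 0$ by Condition~2, so the bracket is nonincreasing and $\frac{\partial R}{\partial u}$ can cross zero at most once, from $+$ to $-$. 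The conclusion (take $v$ as large as possible on $[x_C,x_K]$ and as small as possible on $[x_K,x_N]$ subject to convexity, $v'\in[0,1]$, and the pinned values $u(C),u(K),u(N),q_1(x_N)$) then gives two pieces. So the monotonicity input you correctly identify is the right one, but it must be fed into a pointwise first-order argument in $v(x)$, not into a sign analysis of a fixed weight against $v'$; your proposal as written does not establish the lemma.
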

\vspace{-0.05in}
With the two lemmas above, we are able to prove Theorem~\ref{theorem:2}.

We say $h(x)$ are nonnegative-coefficient polynomial if $h(x)=a_nx^n+a_{n-1}x^{n-1}+...+a_1x+a_0, a_i\geq 0, i=0,...,n$.
Let $h_1$ and $h_2$ are nonnegative-coefficient polynomials, it is easy to show that $h_1(x)e^{h_2(x)}$ satisfies Condition 5. In particular, this expression includes nonnegative-coefficient polynomial density and exponential density as special cases.
\vspace{-0.1in}
\subsection{Optimal mechanism for uniform distributions under unit-demand constraint}
\label{subsec:part25}

A buyer has {\em unit-demand} if $q_1(x,y)+q_2(x,y)\leq 1$. Under unit-demand model, ~\citet[Proposition 2]{Pavlov2011b} states that, if distribution functions satisfy Condition 2, it is without loss to restrict attention on mechanisms such that
$$q_1(x,y)+q_2(x,y)\in\{0,1\}\qquad \forall (x,y)$$
Pavlov solves the optimal mechanism for two items with identical uniform distributions. The resulting mechanism contains 5 menu items for uniform distribution on $[c,c+1]*[c,c+1], c \in(1,\bar{c})$ (where $\bar{c} \approx 1.372$).
We show that in nonidentical settings, the optimal mechanism also contains at most 5 menu items. It follows trivially that our result is tight.
\vspace{-0.05in}
\begin{theorem}
In unit-demand model, if both $f_1$ and $f_2$ are uniform distributions, the optimal mechanism consists of at most 5 menu items.
\label{thm:uni}
\end{theorem}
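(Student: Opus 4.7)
The plan is to follow the template of Theorems \ref{thm:4items} and \ref{theorem:2}. By \citet[Proposition 2]{Pavlov2011b}, in the unit-demand setting under Condition 2 (which uniform densities satisfy since $PR(f_i)\equiv 0$), I may restrict to mechanisms with $q_1(x,y)+q_2(x,y)\in\{0,1\}$ at every type. Thus every menu item is either the null menu $(0,0,0)$ or satisfies $q_1+q_2=1$, and the type space $V$ splits into a convex no-sale region $\Omega_0$ (on which $u=0$) and a sale region $\Omega_1$ on which $u_x+u_y=1$. Solving this first-order PDE subject to convexity forces $u(x,y)=(x+y)/2+\phi(x-y)$ on $\Omega_1$, for some convex one-variable function $\phi$ with $|\phi'|\leq 1/2$ chosen so that $u=0$ along $\partial\Omega_0$. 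A direct computation gives the payment in the sale region as $t(s)=s\phi'(s)-\phi(s)$ with $s=x-y$, so each affine piece of $\phi$ corresponds to exactly one sale menu item, and it suffices to show $\phi$ has at most four affine pieces.

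Next I would exploit uniformity to analyze the two edges $BD$ and $CD$. Since $f_1$ and $f_2$ are constant, $\triangle(x,y)=3f_1f_2$ is a positive constant, so Formula (\ref{eq1}) is linear in $u$ with fixed sign on each edge and in the interior. Combined with convexity, the gradient bounds, and the unit-demand inequality $u_x+u_y\leq 1$, the feasible set is polyhedral in a function-space sense, so the optimum is attained at a piecewise affine function. Adapting the proofs of Lemmas \ref{lemma:fourcanon} and \ref{lemma:four2}, I would show that the optimal $u$ restricted to $BD$ is piecewise linear with at most two pieces -- a flat ``pure item 1'' segment where $q_1=1$ and a single fractional segment -- and similarly for $CD$ with a ``pure item 2'' segment and a single fractional segment. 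Since $u(x_B,y)=(x_B+y)/2+\phi(x_B-y)$ on the sale portion of $BD$, the affine pieces of $u|_{BD}$ give affine pieces of $\phi$ on the $s$-interval $[x_B-y_C,\,x_B-y_A]$, and analogously for $CD$ on $[x_A-y_C,\,x_B-y_C]$. These two $s$-intervals cover the entire range of $s$-values present in $V$, so every kink of $\phi$ is witnessed by a kink of $u$ on $BD$ or $CD$. Two pieces on each edge therefore yield at most four pieces of $\phi$; adding the no-sale menu item gives the claimed bound of five, and tightness is witnessed by the example of \citet{Pavlov2011b}.

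The main obstacle is proving the two-pieces claim for $u|_{BD}$ and $u|_{CD}$. These are constrained extremization problems on a single edge, but the edge data is coupled through the interior of $V$ by convex extension and by the unit-demand equality $q_1+q_2=1$ on the sale region. The technical heart of the argument is a perturbation/straightening step showing that any candidate optimum with three or more pieces on such an edge can be strictly improved by merging two adjacent pieces, analogous to Lemmas \ref{lemma:fourcanon} and \ref{lemma:four2} but modified so that every admissible perturbation preserves the simplex constraint $q_1+q_2=1$. A secondary subtlety is to verify that a kink of $\phi$ at the corner $D$ (i.e., at $s=x_B-y_C$) is not double-counted between $BD$ and $CD$, and to handle the boundary cases in which $\Omega_0$ fails to meet one of $AB$ or $AC$ and the sale region occupies a smaller range of $s$.
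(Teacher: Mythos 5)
Your structural setup is right and matches the paper's key insight: under Pavlov's characterization every sale type pays for a menu item determined solely by $s=x-y$, so in the sale region $u(x,y)=(x+y)/2+\phi(x-y)$ with $\phi$ convex, and bounding the number of menu items is bounding the number of affine pieces of $\phi$. But the central claim you reduce to --- that the optimal $u$ restricted to \emph{each full edge} $BD$ and $CD$ is piecewise linear with at most two pieces --- is stronger than what the paper proves and is in general false. The paper's Lemmas~\ref{lemma:uni1} and~\ref{lemma:uni2} are stated for the \emph{subsegments} $BW$ and $WD$ obtained by cutting $BD$ with the $45^\circ$ line through $E$ (and symmetrically $CG$, $GD$ via the $45^\circ$ line through $S$), and this cut is not cosmetic: the line through $E$ bounds the triangle $EBW$, which is exactly the sub-region whose revenue depends on $u|_{BW}$ alone, so the optimization over $u|_{BW}$ can be carried out locally. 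A lemma about all of $BD$ would couple the optimization to the entire region $CSEBD$, and in particular to $u|_{CD}$, through the shared $\phi$. In the paper's final five-item menu the fractional items $(1-\alpha,\alpha)$ and $(\beta,1-\beta)$ meet at some break value $s_2$ determined by the kink point $P$ on $WD$; only when $P=D$ (i.e.\ $s_2=x_B-y_C$) do both edges show exactly two pieces. Generically $P$ lies strictly between $W$ and $D$, in which case $u|_{BD}$ shows the three pieces $(1,0)$, $\alpha$, $\beta$ (or $u|_{CD}$ shows three, depending on which side of $D$ the break falls). So the arithmetic ``$2+2=4$ nonzero items'' is not obtained by proving two pieces per edge.

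Concretely, the missing work is the paper's merging argument in the proof of Theorem~\ref{thm:uni}: after the four subsegment lemmas one has (a priori) up to eight pieces, and the paper introduces the new plane $(1-\alpha,\alpha,t_\alpha)$ supporting $u$ on $WP$, verifies it never dominates the existing choices on $CD$ (so $u|_{CD}$ is unchanged while $u|_{PW}$ weakly rises and $u|_{PD}$ weakly falls), reapplies Lemma~\ref{lemma:uni2} to show revenue cannot drop, and symmetrizes to get the $\beta$ item --- yielding exactly four nonzero items that the boundary types are then forced to choose among. You flag this step as ``the technical heart'' and describe it as a perturbation that merges adjacent pieces, but you have not specified an operation that provably does not hurt revenue while preserving feasibility and the $q_1+q_2=1$ constraint; that is precisely what the paper's lemmas supply for the $45^\circ$-cut subsegments. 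The secondary subtlety you raise (double-counting a kink at $D$) is a non-issue --- two pieces on each edge would give at most four $\phi$-pieces with or without sharing --- but the primary gap, that two pieces per full edge is not attainable, is real.
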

\vspace{-0.05in}
Let $ASE$ denote the zero utility region and $CSEBD$ the non-zero utility region. For the same reason in Lemma \ref{lemma:pre}, $ASE$ is convex. For points in $ASE$, allocation $(0,0)$ is the best. For $(x,y)\in CSEBD$, $(q_1(x,y),q_2(x,y))\neq (0,0)$, so $q_1(x,y)+q_2(x,y)=1$.
The mechanism is shown in Fig.~\ref{fig:example6}.
Draw a $45$ degree line across $E$, intersecting $BD$ or $CD$ at $W$.
Draw a $45$ degree line across $S$, intersecting $BD$ or $CD$ at $G$.
We consider here the case that $W$ is on $BD$ and $G$ is on $CD$. Other cases follow from similar arguments.

The theorem can be similarly proved via the following two lemmas.
\vspace{-0.1in}
\begin{lemma}
Optimal utility function on $BW$ is piecewise linear with at most 2 pieces.
\vspace{-0.05in}
\label{lemma:uni1}
\end{lemma}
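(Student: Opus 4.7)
My plan is to follow the extreme-point strategy that the paper uses in Lemmas~\ref{lemma:fourcanon} and~\ref{lemma:monotonePR2}. First I would freeze the optimal utility $u$ on the entire rectangle $ABDC$ except on the segment $BW$, and then re-optimize only over the trace $u|_{BW}$. Substituting into Formula~(\ref{eq1}), the part of the revenue that depends on $u|_{BW}$ is the portion of the first edge term $\int_{y_A}^{y_W} x_B f_1(x_B) f_2(y)\, u(x_B,y)\,dy$ coming from $BW\subset BD$, the other terms being either frozen or related to $u|_{BW}$ only through IC (linearly, via convex extension). Because $f_1$ and $f_2$ are uniform (hence strictly positive constants) and $x_B>0$, this is a linear functional of $u|_{BW}$ with strictly positive weight, so its maximum over the feasible set is attained at an extreme point of that set.

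The feasible set consists of one-variable convex functions on $[y_A,y_W]$ whose slope $u_y$ lies in $[0,1]$ (by the unit-demand constraint $q_2(x_B,y)\in[0,1]$), whose boundary values at $y_A$ and $y_W$ match the frozen $u$, and which are IC-compatible with the surrounding utility. Standard convex analysis gives that the extreme points of such a convex set of one-variable convex functions with Lipschitz bound are piecewise linear. Hence the optimizing $u|_{BW}$ is piecewise linear.

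Next I would bound the number of pieces at two using Pavlov's characterization and the $45^\circ$ geometry specific to the unit-demand setting. Along $BW$ the slope $u_y(x_B,y)=q_2(x_B,y)$ is non-decreasing in $y$ by convexity of $u$; and because $BW\subset CSEBD$, Pavlov's proposition forces $q_1(x_B,y)+q_2(x_B,y)=1$ on $BW$. The $45^\circ$ line $EW$ is exactly the boundary between the pure item-1 region ($q_2=0$) and the region of mixed allocations: once $q_1+q_2=1$, the indifference curves of any fixed menu item in $(x,y)$-space are $45^\circ$ lines, and only one such line, namely $EW$, can meet $BW$, and only at its endpoint $W$. Hence $q_2$ takes at most two distinct values on $BW$ --- the value $0$ on an initial segment from $B$, and a single positive value on a final segment up to $W$ --- so $u|_{BW}$ has at most two linear pieces.

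The main obstacle is the last step: ruling out that several distinct mixed menu items with different values of $q_2$ are simultaneously chosen by types on $BW$, which a priori could produce more than two pieces. I would address this by a local swap argument in the spirit of the plane-translation/rotation technique in Theorem~\ref{theorem:menumonotone}: given any candidate $u|_{BW}$ with more than two linear pieces, one constructs a convex replacement that preserves the endpoint values and slope bounds, lies pointwise above the original inside $(y_A,y_W)$, and hence strictly increases the (positively-weighted) boundary contribution to revenue while remaining globally IR/IC when pasted together with the frozen $u$ --- contradicting optimality unless only two pieces are present.
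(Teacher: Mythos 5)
There is a genuine gap. Your first paragraph asserts that the only part of the revenue that varies with $u|_{BW}$ is the boundary integral $\int_{y_A}^{y_W} x_B f_1(x_B) f_2(y)\, u(x_B,y)\, dy$, and that this is ``a linear functional of $u|_{BW}$ with strictly positive weight.'' This is not correct. Because $q_1+q_2=1$ on $CSEBD$, the allocation, and hence the utility, propagates from $BW$ into the interior of the triangle $EBW$ along $45^\circ$ lines, so the interior term $-\int_{EBW}\triangle(z)u(z)\,dz$ also varies with $u|_{BW}$; for uniform densities $\triangle = 3f>0$, so this term enters with a \emph{negative} sign. The paper's actual computation collects all of these contributions into a single linear functional $R_{EBW} = \int q_2(y)\,v_4(y)\,dy + C_4$ in the \emph{derivative} $q_2(y)=u_y(x_B,y)$, where the weight $v_4$ has no a priori sign. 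So your ``positively weighted, hence maximize at an extreme point'' shortcut misses the step that actually carries the proof, which is showing that a linear functional of a monotone bounded function $q_2$ is maximized by a single-jump step function (the Abel-summation argument used in Lemma~\ref{lemma:monotonePR2}); that argument works regardless of the sign of $v_4$.

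The second gap is the bound of two pieces. The extreme-point observation in your second paragraph does not by itself bound the number of linear pieces: extreme points of the relevant set of convex Lipschitz functions can have any, even countably many, breakpoints. Your third paragraph attempts the bound via the $45^\circ$ geometry, but the reasoning is circular: each mixed menu item corresponds to its own $45^\circ$ indifference line, and a priori many such lines could cross $BW$ at interior points, which is precisely what the lemma must rule out. The ``only one $45^\circ$ line, namely $EW$, meets $BW$'' claim begs the question. You acknowledge this in the last paragraph and gesture at a plane-translation/rotation-style swap, but never construct the replacement nor verify its convexity, IC compatibility with the frozen portion, or its revenue dominance. In the paper, the bound of two pieces is a direct consequence of the one-jump step structure of the optimizer of $\int q_2 v_4$ over monotone $q_2\in[0,q_2(y_W)]$; that is the missing ingredient in your proposal.
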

\begin{lemma}
Optimal utility function on $WD$ is piecewise linear with at most 2 pieces.
\label{lemma:uni2}
\end{lemma}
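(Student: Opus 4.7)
The plan is to adapt the extreme-point / variational scheme foreshadowed in the introduction (and already used for the sister Lemma~\ref{lemma:uni1} on $BW$) to the segment $WD$, taking advantage of the fact that for uniform densities the interior weight $\triangle(x,y)=3f_1(x)f_2(y)$ is a positive constant.

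I would begin by isolating the dependence of the revenue on $u|_{WD}$. Pavlov's unit-demand characterization forces $q_1+q_2=1$ throughout $CSEBD$, hence $u_x+u_y=1$ there, so $u$ is affine along every $45^\circ$ ray in $CSEBD$, growing at unit rate in direction $(1,1)$. The ``slab'' $\Omega_{WD}\subset CSEBD$ consisting of points whose forward $45^\circ$ ray exits $V$ through $WD$ is bounded by two $45^\circ$ rays (from $W$ back to $E$ and from $D$ back into the interior) together with a piece of the left/top boundary of $CSEBD$, and on all of $\Omega_{WD}$ the identity $u(x_B-t,y-t)=u(x_B,y)-t$ expresses $u$ entirely in terms of the trace $u|_{WD}$.

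Substituting this into the revenue formula~(\ref{eq1}), the only contributions that move with $u|_{WD}$ are the boundary term $\int_{y_W}^{y_C} x_B u(x_B,y)f_1(x_B)f_2(y)\,dy$ (positive) and the interior term $-\triangle\int_{\Omega_{WD}} u\,dA$, which, after collapsing the inner integral along the $45^\circ$ direction, also reduces to a one-dimensional integral in $y$. Since $\triangle$ and the $f_i$ are constants, the sum is a linear functional $R=\int_{y_W}^{y_C} w(y)\,u(x_B,y)\,dy + \mathrm{const}$ with an explicit polynomial weight $w(y)$. One integration by parts rewrites this as a linear functional $\int_{y_W}^{y_C} W(y)\,q_2(x_B,y)\,dy+\mathrm{const}$ in the slope $q_2=u_y$, which by convexity of $u$ must be non-decreasing, valued in $[0,1]$, and pinned at the endpoints by the neighbouring cells of the optimal menu.

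Maximising a linear functional over such a monotone $[0,1]$-valued function is the classical bathtub problem: the optimum is a step function, and each jump corresponds to a sign change of $W$ and contributes one extra linear piece to $u|_{WD}$. The main obstacle — the step that actually forces the bound of \emph{two} pieces rather than three or more — is to verify that $W(y)$ changes sign at most once on $(y_W,y_C)$. Because the densities are uniform and the slab width is an affine function of $y$, $W$ works out to be a low-degree polynomial whose real roots can be located by direct inspection; a single jump in $q_2$ then yields the claimed two-piece piecewise linear structure of $u|_{WD}$.
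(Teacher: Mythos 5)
There is a genuine gap. Your plan hinges on the claim that, after collapsing the interior integral along the $45^\circ$ direction, the revenue is a \emph{linear} functional $R=\int_{y_W}^{y_C} w(y)\,u(x_B,y)\,dy+\mathrm{const}$ of the trace $u|_{WD}$, which would let you invoke the bathtub/extreme-point argument. That claim is false. The collapsed slab you describe is exactly the set of points along the backward $45^\circ$ ray from $(x_B,y)$ down to the zero-utility curve $SE$, and along that ray $u$ decreases at unit rate from $u(x_B,y)$ to $0$. Hence the inner integral of $u$ along the ray equals $\tfrac12\,u(x_B,y)^2$, not a constant multiple of $u(x_B,y)$: the interior term is \emph{quadratic} in $u(x_B,y)$, not linear. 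This is precisely what the paper's computation produces,
\begin{equation*}
R_{GSMEWD}=C+\int_{x_G}^{x_D}f\,u(x,y_C)\Bigl(y_C-\tfrac32 u(x,y_C)\Bigr)dx+\int_{y_W}^{y_D}f\,u(x_B,y)\Bigl(x_B-\tfrac32 u(x_B,y)\Bigr)dy,
\end{equation*}
and because of the $u^2$ term the revenue is a strictly concave (not linear) functional of $u|_{WD}$, so its maximum over the feasible convex set need not sit at an extreme point. The bathtub argument, integration by parts into a linear functional of $q_2$, and the putative weight $W(y)$ whose sign changes you would count, all disappear once this is noticed.

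The correct reasoning, which the paper uses, is different in kind: the pointwise marginal $\partial R/\partial u(x_B,y)=f\bigl(x_B-3u(x_B,y)\bigr)$ is positive when $u(x_B,y)<x_B/3$ and negative when $u(x_B,y)>x_B/3$. Since $u(x_B,\cdot)$ is nondecreasing along $WD$ (convexity of $u$), this marginal changes sign at most once, at a single point $P$ with $u(x_B,y_P)=x_B/3$. Fixing $u(W)$, $u(P)$, $u(D)$, one then wants $u$ as large as possible on $WP$ and as small as possible on $PD$, subject to convexity; the extremizer is $\max(u^1,u^2)$ where $u^1$ is the chord through $(y_W,u(W))$ and $(y_P,u(P))$ and $u^2$ the line through $(y_D,u(D))$ with slope $q_2(y_D)$, i.e.\ two linear pieces. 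So the ``one sign change'' you were after is real, but it comes from the monotonicity of $u(x_B,\cdot)$ interacting with a concave quadratic, not from a fixed polynomial weight $W(y)$; your linearity assumption has to be repaired before the rest of the argument can be carried out.
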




\bibliographystyle{plainnat}
\bibliography{team}



\begin{appendix}

\section{Appendix: Proofs of Section~\ref{sec:part2}}

\begin{proof}\textbf{of Lemma~\ref{lemma:graph}}We first determine the relative positions of the four possible regions.

If the seller keeps both items, the buyer's utility is zero. Since $u(x,y)$ is an increasing function, it assigns $q=(0,0)$ in the bottom left region, i.e. $ASME$. Since $u(x,y)$ is convex, the convex combination of any two zero-utility points must also be zero. Therefore, $ASME$ is a convex region.

If for a type $(x_0,y_0)$ with $q_1(x_0,y_0)=q_2(x_0,y_0)=1$, for any point $(x_1,y_1)$, $IC$ requires that
\begin{eqnarray}
x_0+y_0-t(x_0,y_0)\geq x_0q_1(x_1,y_1)+y_0q_2(x_1,y_1)-t(x_1,y_1),\nonumber\\
x_1q_1(x_1,y_1)+y_1q_2(x_1,y_1)-t(x_1,y_1)\geq x_1+y_1-t(x_0,y_0).\nonumber
\end{eqnarray}
Summing the two inequalities, we get $(q_1(x_1,y_1)-1)(x_1-x_0)+(q_2(x_1,y_1)-1)(y_1-y_0)\geq 0$.
If $x_1>x_0, y_1>y_0$, we must have $q_1(x_1,y_1)=q_2(x_1,y_1)=1$.

Let $(x_2,y_2)$ be a point where some positive proportions of the items are sold, then according to Pavlov's characterization~\citep{Pavlov2011a}, one of the items must be sold deterministically. Consider two types $(x_2,y_2)$ and $(x_3,y_3)$ where $q_1(x_2,y_2)=1, q_2(x_2,y_2)<1$ and $q_1(x_3,y_3)<1,q_2(x_3,y_3)=1$.
By $IC$, we must have
\begin{eqnarray}
x_2+y_2q_2(x_2,y_2)-t(x_2,y_2)\geq x_2q_1(x_3,y_3)+y_2-t(x_3,y_3),\nonumber\\
x_3q_1(x_3,y_3)+y_3-t(x_3,y_3)\geq x_3+y_3q_2(x_2,y_2)-t(x_2,y_2).\nonumber
\end{eqnarray}
Summing up the two inequalities, we get $(1-q_1(x_3,y_3))(x_2-x_3)+(1-q_2(x_2,y_2))(y_3-y_2)\geq 0$. So, one of $x_2<x_3$ and $y_2>y_3$ does not hold. This implies the second part of (5).

To sum up, $(1,1)$ must be assigned to the upper right corner, $(1,q_2(x,y))$ is assigned to the bottom right corner, $(q_1(x,y),1)$ is assigned to the upper left corner, and $(0,0)$ is assigned to the bottom left corner, (some regions may be empty).

Let the allocation vector at $(x_4,y_4)$ be $(1,q_2(x_4,y_4))$. For any $x\in[x_4, x_B]$, by IC, we must have $q_1(x,y_4)=1$, so $u(x,y_4)=u(x_4,y_4)+x-x_4$. It is still in the buyer's best interest to choose menu item $(1,q_2(x_4,y_4))$ at $(x,y_4)$. This immediately implies the first part of (5): the boundary between for different $q_2$ in $(1,q_2(x,y))$ is horizontal. In particular, in Fig.~\ref{fig:example1}, $MN$ is horizontal. Similarly, $LQ$ is vertical.

If there is a point on curve $SE$ that chooses menu item $(1,1)$, the mechanism is of the form shown in Fig.~\ref{fig:example1}, otherwise it is of the form shown in Fig.~\ref{fig:example2add}.
\end{proof}

\begin{proof}\textbf{of Lemma~\ref{lemma:fourcanon}}
For any point $K$ on $BN$(see Fig.\ref{fig:newproof}), draw a horizontal line that intersects curve $SE$ at $W$. Because $u_x=1$ for all the types in region $EBNM$ , we can represent the utility of any point $(x,y_K)$ on line $KW$ using $u(K)$. Formally, $u(x,y_K)=u(K)+x-x_K, \forall x\in[x_W,x_K]$.

The revenue obtained in $EBNM$ is
\begin{figure}
  \centering
  \includegraphics[width=7cm]{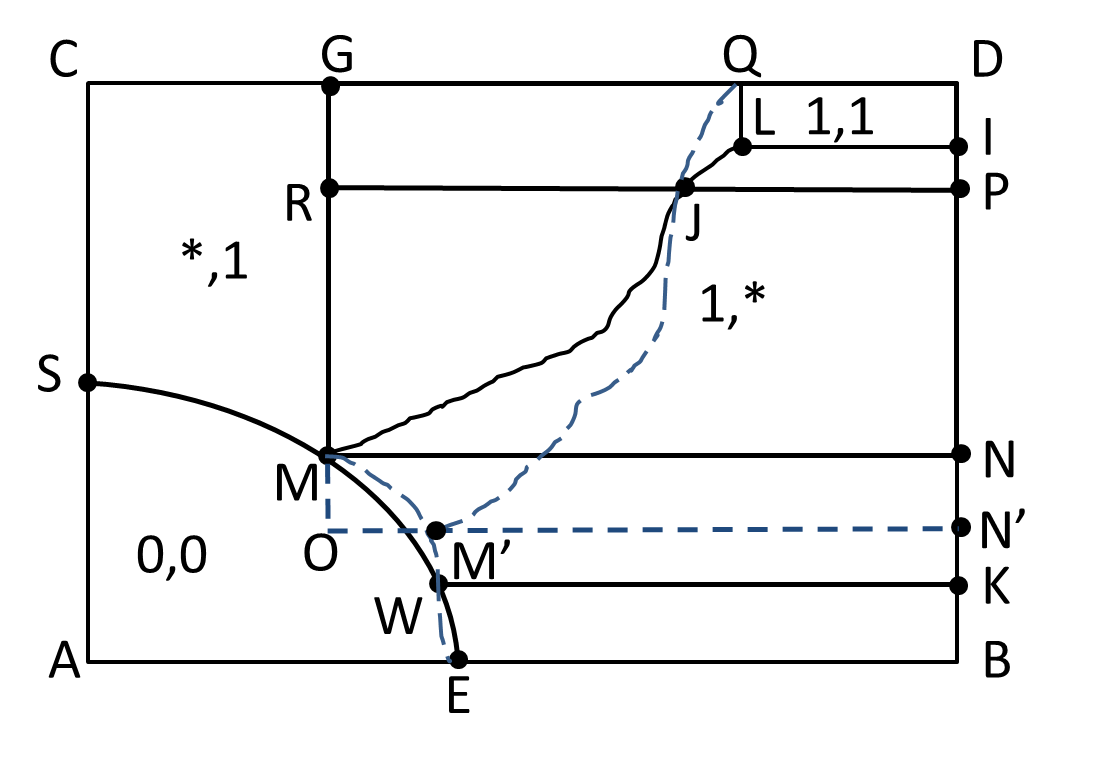}\\
  \caption{The optimal allocation that there is no point on curve SME chooses allocation menu (1,1).}
  \label{fig:newproof}
\end{figure}
\begin{eqnarray}
R_{EBNM}&=&\oint_{EBNM}\mathbf{T}\cdot \mathbf{\hat n}ds - \int_{EBNM} \triangle(z)u(z)dz\nonumber\\
&=&\int_{BN}\mathbf{T}\cdot \mathbf{\hat n}ds -\int_{y_A}^{y_M}\int_{x_B-u(x_B,y)}^{x_B}\triangle(x,y)u(x,y)dxdy+\int_{NMEB}\mathbf{T}\cdot \mathbf{\hat n}ds\nonumber\\
&=&\int_{y_A}^{y_M}[x_Bu(x_B,y)f_1(x_B)f_2(y)-\int_{x_B-u(x_B,y)}^{x_B}\triangle(x_B,y)(u(x_B,y)+x-x_B)dx]dy+c\nonumber\\
&=&\int_{y_A}^{y_M}R(u(x_B,y),y)dy+c\nonumber
\end{eqnarray}
$c=\int_{NMEB}\mathbf{T}\cdot \mathbf{\hat n}ds$ depends on $u(NM),u(ME)$,and $u(EB)$. We fix these utility and study $u(BN)$.
Let $R(u(x_B,y),y)=x_Bu(x_B,y)f_1(x_B)f_2(y)-\int_{x_B-u(x_B,y)}^{x_B}\triangle(x_B,y)(u(x_B,y)+x-x_B)dx$ and
$R_u(u(x_B,y),y)$ denote the partial derivative on $u(x_B,y)$ dimension.
$$R_u(u(x_B,y),y)=f_2(y)[x_Bf_1(x_B)-\int_{x_B-u(x_B,y)}^{x_B}f_1(x)(3+PR(f_1(x))+PR(f_2(y)))dx]$$

Let $v(l)=\int_{x_B-l}^{x_B}f_1(x)(3+PR(f_1(x))+PR(f_2(y)))dx$. Because $3+PR(f_1(x))+PR(f_2(y))\geq 0$ is a constant, $v$ is an increasing function of $l$. When $x_B f_1(x_B)-v(u(x_B,y))>0$, $R_{EBNM}$ is increasing as $u(x_B,y)$ increases. From the Fig.\ref{fig:newproof} we can see, when $(x_B,y)$ moves from $B$ to $N$, $u(x_B,y)$ weakly increases, $v(u(x_B,y))$ weakly increases and $x_Bf(x_B)-v(u(x_B,y))$ weakly decreases.

There are 3 possibilities: 1.$x_Bf(x_B)-v(u(x_B,y))> 0, y\in[y_B,y_N]$, 2.$x_Bf(x_B)-v(u(x_B,y))< 0, y\in[y_B,y_N]$, 3.$\exists y'\in[y_B,y_N], x_Bf(x_B)-v(u(x_B,y))=0$ and $x_Bf(x_B)-v(u(x_B,y))\leq 0, y\in[y_B,y'], x_Bf(x_B)-v(u(x_B,y))\geq 0, y\in[y',y_N]$. 1 can be regarded as a special case of 3 by setting $y'=y_B$, because $u(b,y), y\in[y_B,y_N]$ must be also as large as possible. 2 can be regarded as an special case of 3 by setting $y'=y_N$, because $u(x_B,y), y\in[y_B,y_N]$ must be as small as possible.

So it is without loss to restrict attention on case 3. Since $K$ is randomly chosen, set $y'=y_K$. Revenue is increasing as $u(KN)$ decrease and revenue is increasing as $u(KB)$ increase. By convexity of $u$, with fixed $u(B)$, $u(K)$, and $u(N)$,
optimal $u(BN)$ comprises of two lines: the straight line(with extended line) across points $(y_B,u(B))(y_K,u(K))$, and straight line across point $(y_N,u(N))$ with slope $q_2(N)$.
\end{proof}

\begin{proof}\textbf{of Lemma~\ref{lemma:four2}}
The revenue obtained in region $MNDG$ is
\begin{eqnarray}
R_{MNDG}&=&\oint_{MNDG}\mathbf{T}\cdot \mathbf{\hat n}ds - \int_{MNDG} \triangle(z)u(z)dz\nonumber\\
&=&\int_{DGMN}\mathbf{T}\cdot \mathbf{\hat n}ds + \int_{ND}\mathbf{T}\cdot \mathbf{\hat n}ds-
\int_{y_N}^{y_D}\int_{x_M}^{x_N} \triangle(x,y)u(x,y)dxdy\nonumber\\
&=&c + \int_{y_N}^{y_D}[x_Bu(x_B,y)f_1(x_B)f_2(y) - \int_{x_M}^{x_N} \triangle(x,y)u(x,y)dx]dy\nonumber
\end{eqnarray}
$c=\int_{DGMN}\mathbf{T}\cdot \mathbf{\hat n}ds$ depends on $u(DG),u(GM)$, and $u(MN)$. We fix these utilities and study $u(DN)$. Let $R(u(x_B,y),y)=x_Bu(x_B,y)f_1(x_B)f_2(y) - \int_{x_M}^{x_N} \triangle(x,y)u(x,y)dx$.
Pick a random point $P$ on $ND$, draw a horizontal line across $P$ that intersects curve $MLQ$ at $J$,  intersects segment $MG$ at $R$.
In region $CSMLID$ point $(x,y)$ gets item 2 deterministically and we assume it gets item 1 with probability $q_1(x)$.
\begin{eqnarray}
& &R(u(x_B,y_J),y_J)\nonumber\\
&=&u(x_B,y_J)x_Bf_1(x_B)f_2(y_J) - \int_{x_M}^{x_J}u(x,y_J)\triangle(x,y_J)dx -
\int_{x_J}^{x_B}u(x,y_J)\triangle(x,y_J)dx\nonumber\\
&=&[u(R)+\int_{x_M}^{x_J}q_1(x)dx+x_B-x_J]x_Bf_1(x_B)f_2(y_J)-\int_{x_M}^{x_J}[u(R)+\int_{x_R}^x q_1(l)dl]\triangle(x,y_J)dx \nonumber\\
& &-\int_{x_J}^{x_B}[u(R)+\int_{x_M}^{x_J}q_1(l)dl+x-x_J]\triangle(x,y_J)dx\nonumber
\end{eqnarray}

Because $u(x_B,y_J)=u(R)+\int_{x_M}^{x_J}q_1(x)dx+x_B-x_J=u(R)+x_B-\int_{x_M}^{x_J}(1-q_1(x))dx$, we have
$\frac{\partial u}{\partial x_J}(x_B,y_J) \leq 0$

\begin{eqnarray}
& &\frac{\partial R}{\partial x_J}(u(x_B,y_J),y_J)\nonumber\\
&=&(q_1(x_J)-1)x_Bf_1(x_B)f_2(y_J)-\int_{x_J}^{x_B}(q_1(x_J)-1)\triangle(x,y_J)dx\nonumber\\
&=&(q_1(x_J)-1)f_2(y_J)[x_Bf_1(x_B)-\int_{x_J}^{x_B}f_1(x)(3+PR(f_1(x))+PR(f_2(y_J)))dx]\nonumber\\
&=&(q_1(x_J)-1)f_2(y_J)v(x_J)\nonumber
\end{eqnarray}
In the last equality, we reset $v(x_J)=x_Bf_1(x_B)-\int_{x_J}^{x_B}f_1(x)(3+PR(f_1(x))+PR(f_2(y_J)))dx$.
Since $q_1(x_J)-1\leq 0$, $sgn(v(x_J))=-sgn(\frac{\partial R}{\partial x_J})$. Thus
$$sgn(v(x_J))=-sgn(\frac{\partial R}{\partial u}\frac{\partial u}{\partial x_J})=sgn(\frac{\partial R}{\partial u})$$

Since $3+PR(f_1(x))+PR(f_2(y))\geq 0$, $v(x_J)$ is weakly monotone in $x_J$.
By (3) and (4) of Lemma~\ref{lemma:pre}, while $P$ moves up, the intersection $J$ moves towards right, i.e., $x_J$ weakly increases. Then $v(x_J)$ weakly increases. So if $v(x_J)$ switches sign, it must switch sign from negative to positive. So $R_u(u(x_B,y_J),y_J)$ can only switch sign from negative to positive.

There are three cases for the sign of $R_u(u(x_B,y_J),y_J)$. It's similar to the proof in Lemma \ref{lemma:fourcanon}. WLOG, we can assume that
$v(x_J)=0$. Thus $v(x)\geq 0, x\in[x_J,x_D]$ and $v(x)\leq 0, x\in[x_A,x_J]$. Therefore $R_u(u(x_B,y_J),y_J)\geq 0, y\in[y_P,y_D]$, $R_u(u(x_B,y_J),y_J)\leq 0, y\in[y_N,y_P]$.
So $R_{MNDG}$ is increasing as $u(NP)$ decrease and is increasing as $u(PD)$ increase. With fixed $u(N)$, $u(P)$, and $u(D)$, by convexity, optimal $u(ND)$ comprises of two lines: the straight line (with extended line) across points $(y_P,u(P))(y_D,u(D))$, and straight line across point $(y_N,u(N))$ with slope $q_2(N)$.
\end{proof}

\begin{proof}\textbf{of Theorem~\ref{thm:4items}}
\begin{figure}
  \centering
  \includegraphics[width=6cm]{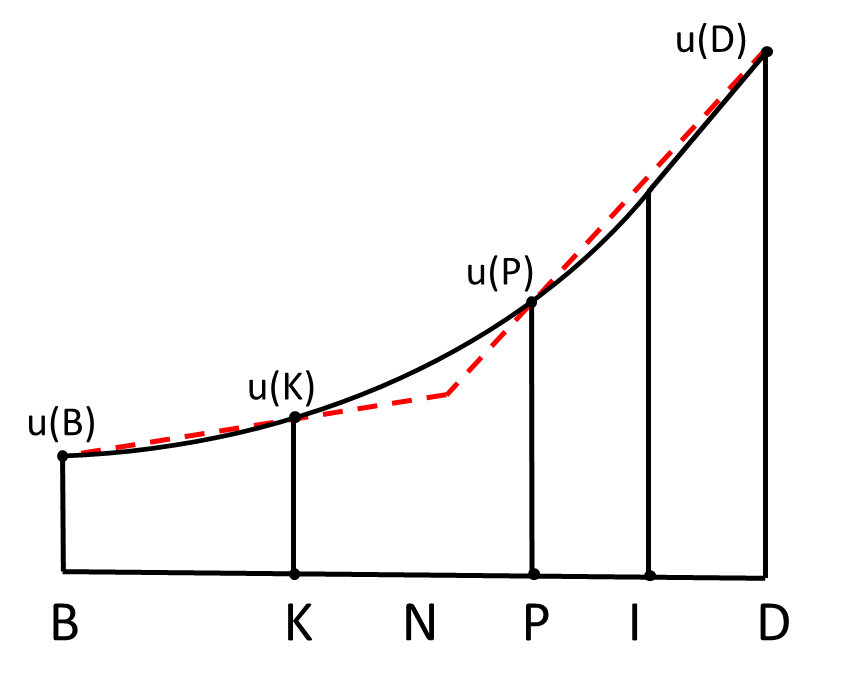}\\
  \caption{Utility function on BD.}
  \label{fig:example3}
\end{figure}

We sum up the conclusions drawn on different segments of $BD$ and settle final shape of $u(BD)$, subject to the convexity constraints and fixed $u(B),u(K),u(P)$, and $u(D)$.
In Fig.~\ref{fig:example3}, the black solid line is an arbitrary convex utility function. By Lemma \ref{lemma:fourcanon} and \ref{lemma:four2}, the optimal $u(BD)$ consists of at most 3 pieces:
the straight line (with extended line) across points $(y_B,u(B))(y_K,u(K))$,
the straight line (with extended line) across points $(y_P,u(P))(y_D,u(D))$, and the straight line across point $(y_N,u(N))$ with slope $q_2(N)$. In fact, we can prove an even stronger result: it turns out that the line across point $(y_N,u(N))$ is unnecessary. The remainder of the proof is to confirm this claim.

In Fig.~\ref{fig:example3}, the red dashed utility consists of two parts: straight line across points $(y_B,u(B))$ and $(y_K,u(K))$, and straight line across points $(y_P,u(P))$ and $(y_D,u(D))$.
We denote "original" utility function to be the black arbitrary convex utility function. We denote "new" utility function to be the red dashed utility function. We prove that the revenue based on new utility function is greater than or equal to the revenue based on the black original utility.
Therefore when $u(B),u(K),u(P),u(D)$ and their coordinates on the $y$-dimension $y_B,y_K,y_P,y_D$ are fixed,
the optimal utility function must be of the shape portrayed as the red dashed line.

First, we study what the graph representation looks like.
Since $u(Q),u(J),u(M),u(W)$ remain the same, they are still on the boundaries (see Fig.~\ref{fig:newproof}):
\begin{itemize}
\item $M$ is on the boundary between $(*,1)$ region and $(0,0)$ region.
\item $W$ is on the boundary between $(1,*)$ region and $(0,0)$ region.
\item $J$ is on the boundary between $(1,*)$ region and $(*,1)$ region.
\item $Q$ is on the boundary between $(*,1)$ region and $(1,*)$ region.
\end{itemize}
Let $M'$ denote the new intersection of the three parts: $(0,0),(*,1)$, and $(1,*)$.
The new boundary between $(1,*)$ and $(*,1)$ is the dashed line $QJM'$. The new boundary between $(*,1)$ and $(0,0)$ is the dashed line $SMM'$.
The new boundary between $(1,*)$ and $(0,0)$ is the dashed line $M'WE$.
Since $u(PD)$ and $u(KB)$ weakly increase, and $u(KP)$ weakly decreases, $M'$ must lie in $MWKN$ region, dashed $QJ$ lies in $GMLQ$ region, dashed $WE$ lies in $SMWE$ region. Draw horizontal line through $M'$ and it intersects $BD$ at $N'$, intersects the extended line of $GM$ at $O$.

With fixed $u(CD)$, given the boundary of $(1,*)$, we can calculate $u(BD)$ as follows: say $(x,y)$ is on the boundary between $(1,*)$ amd $(*,1)$, then $u(x_B,y)=u(x,y)+x_B-x=u(x,y_C)+y-y_C+x_B-x$. So we can define $u(BD)$ according to their boundary.
Let the original utility and revenue function based on the boundary $QLJMWE$ is $u^1$ and $R^1$.
Let the new utility and revenue function based on the boundary $QJM'W$ and dashed $WE$ is $u^3$ and $R^3$.
Our goal is to prove $R^1\leq R^3$.

$$R^1_{DGMWEB}\leq R^3_{DGM'WEB}$$
Since $R^1_{DGMWEB}=R^1_{DGMWK}+R^1_{WEBK}$(here $WE$ is the solid line) and $R^3_{DGM'WEB}=R^3_{DGMM'WK}+R^3_{WEBK}$(here, $WE$ is the dashed line).  It suffices to prove $R^1_{DGMWK}\leq R^3_{DGMM'WK}$ and $R^1_{WEBK}\leq R^3_{WEBK}$ separately.
\begin{eqnarray}
&&R^1_{WEBK}\nonumber\\
&=&\int_{WK}\mathbf{T}^1\cdot \mathbf{\hat n}ds+\int_{EB}\mathbf{T}^1\cdot \mathbf{\hat n}ds+\int_{y_B}^{y_K}u^1(x_B,y)x_Bf_1(x_B)f_2(y)dy\nonumber\\
&&-\int_{y_B}^{y_K}\int_{x_B-u(x_B,y)}^{x_B}u^1(x,y)\triangle(x,y)dxdy\nonumber\\
&=&\int_{WK}\mathbf{T}^3\cdot \mathbf{\hat n}ds+\int_{EB}\mathbf{T}^3\cdot \mathbf{\hat n}ds+\int_{y_B}^{y_K}R(u^1(x_B,y),y)dy\nonumber\\
&\leq&\int_{WK}\mathbf{T}^3\cdot \mathbf{\hat n}ds+\int_{EB}\mathbf{T}^3\cdot \mathbf{\hat n}ds+\int_{y_B}^{y_K}R(u^3(x_B,y),y)dy\nonumber\\
&=&R_{WEBK}^3\nonumber
\end{eqnarray}
Here $\mathbf{T}^i=zu^if(z)$ and we use the same $R(u(x_B,y),y)$ as that in Lemma~\ref{lemma:fourcanon}. What still remains to show is $R^1_{DGMWK}\leq R^3_{DGMM'WK}$.

In order to prove this claim, we introduce an intermediate utility and revenue as follows.
Let the intermediate utility and revenue function based on the boundary $QLJMM'W$ and dashed $WE$ is $u^2$ and $R^2$.
Consider the intermediate case, $MM'$ is the boundary shared by all three regions $(1,*),(*,1)$ and $(0,0)$. For any point $(x,y)$ on $MM'$,
we can calculate the utility of corresponding point $(x_B,y)$ as $u(x_B,y)=u(x,y)+x_B-x=u(x,y_C)+y-y_C+x_B-x$. In fact, according to Lemma~\ref{lemma:pre}, this case cannot happen and $u^2(BD)$ does not retain convexity any more. It is important to note that, here, we are only concerned with $R^2$, not the feasibility of $u^2$.
That is, we use $R^2$ as a number to facilitate the comparison between $R^1$ and $R^3$.

Now we show $R^1_{DGMWK}\leq R^3_{DGMM'WK}$.
\begin{eqnarray}
&&R^1_{DGMWK}\nonumber\\
&=&R^1_{DGMN}+R^1_{MWKN}\nonumber\\
&\leq&R^1_{DGMN}+R^2_{MM'WKN}\nonumber\\
&=&R^2_{DGMN}+R^2_{MM'N'N}+R^2_{M'WKN'}\nonumber\\
&=&R^2_{DGMM'N'}+R^2_{M'WKN'}\nonumber\\
&=&R^2_{DGMM'N'}+R^3_{M'WKN'}\nonumber\\
&\leq&R^3_{DGMM'N'}+R^3_{M'WKN'}\nonumber\\
&=&R^3_{DGMM'WK}\nonumber
\end{eqnarray}

The first inequality follows from a similar proof of that in $R^1_{WEBK}\leq R_{WEBK}^3$.

In order to show the second inequality, introduce $u^{22}$ and $u^{33}$ as follows:
\begin{displaymath}
u^{22}(x,y)=\left\{ \begin{array}{ll}
u^2(x,y) & (x,y)\in DGMM'N'\\
u^2(x,y_C)+y-y_C & (x,y)\in MM'O
\end{array}\right.
\end{displaymath}
\begin{displaymath}
u^{33}(x,y)=\left\{ \begin{array}{ll}
u^3(x,y) & (x,y)\in DGMM'N'\\
u^3(x,y_C)+y-y_C & (x,y)\in MM'O
\end{array}\right.
\end{displaymath}
\begin{eqnarray}
&&R^2_{DGMM'N'}-R^3_{DGMM'N'}\nonumber\\
&=&\int_{DGMM'N'}\mathbf{T}^2\cdot \mathbf{\hat n}ds +\int_{N'D}\mathbf{T}^2\cdot \mathbf{\hat n}ds
-\int_{y_{N'}}^{y_D}\int_{x_M}^{x_{N'}}\triangle(x,y)u^{22}(x,y)dxdy\nonumber\\
&&+\int_{MM'O}\triangle(z)u^{22}(z)dz\nonumber-R^3_{DGMM'N'}\nonumber\\
&=&\int_{y_{N'}}^{y_D}[x_Bu^{22}(x_B,y)f_1(x_B)f_2(y)-\int_{x_M}^{x_N}\triangle(x,y)u^{22}(x,y)dx ]dy\nonumber\\
&&-\int_{y_{N'}}^{y_D}[x_Bu^{33}(x_B,y)f_1(x_B)f_2(y)-\int_{x_M}^{x_N}\triangle(x,y)u^{33}(x,y)dx ]dy\nonumber\\
&=&\int_{y_{N'}}^{y_D}[R(u^{22}(x_B,y),y)-R(u^{33}(x_B,y),y)]dy\label{eq4}
\end{eqnarray}
In the last equality, we use the same $R(u(x_B,y),y)$ as in Lemma~\ref{lemma:four2}.

In the following, we show Equation (\ref{eq4}) is non-positive.
According to Lemma~\ref{lemma:four2},
$$R_u(u(x_B,y),y)\geq 0 ~~y\in[y_J,y_Q],u(x_B,y)\geq u(J)$$
$$R_u(u(x_B,y),y)\geq 0 ~~y\in[y_{M'},y_J],u(x_B,y)\leq u(J)$$
Because $u^{22}(x_B,y)\leq u^{33}(x_B,y) ~~y\in[y_P,y_D]$ and $u^{22}(x_B,y)\geq u^{33}(x_B,y) ~~y\in[y_{N'},y_P]$, we have $R(u^{22}(x_B,y),y)\leq R(u^{33}(x_B,y),y),~~y\in[y_{N'},y_P]$. So Equation $(\ref{eq4})\leq 0$, then $R^2_{DGMM'N'}\leq R^3_{DGMM'N'}$.

Up to now, we have proved that red dashed utility function on $BD$ segment indeed yield the highest revenue subject to fixed $u(B),u(K),u(P)$, and $u(D)$ (Fig.~\ref{fig:example3}). In other words, $u(BD)$ is piecewise linear with two pieces. Same for $u(CD)$.

We have showed region $CSEBD$ consists of 4 menu items. In other words, the whole mechanism consists of 5 menu items.
Say, points on $BD$ segments choose menus $(1,\alpha,t_\alpha),(1,\gamma,t_\gamma), ~\alpha\leq\gamma$. Points on $CD$ choose menus $(\beta,1,t_\beta),(\theta,1,t_\theta), ~\beta\leq\theta$.
What remains to show is that the top right two regions both allocate with probabilities (1,1) thus are in fact a unique region.

\citet[Theorem 16]{Manelli2007}  state that in the optimal mechanism, there must exist a segment chooses allocation $(1,1)$. By Lemma ~\ref{lemma:pre}, the $(1,1)$ region is on the top right corner of the rectangle. So in the optimal mechanism there are slopes of utility lines equal to 1 in both $BD$ and $CD$. Hence, $\gamma=\theta=1$. $BD$ and $CD$ share the same menu item $(1,1)$.

To sum up, the optimal mechanism menu consists of at most 4 menu-items:
\begin{center}
\begin{tabular}{|c|c|c|}
\hline
$q_1$ & $q_2$ & $t$\\
\hline
0 & 0 & 0\\
\hline
1 & $\alpha$ & $t_\alpha$\\
\hline
$\beta$ & 1 & $t_\beta$\\
\hline
1 & 1 & $t_1$\\
\hline
\end{tabular}
\end{center}
\end{proof}

\begin{proof}\textbf{of Theorem~\ref{theorem:6}}

\begin{figure}[H]
  \centering
  \includegraphics[width=6cm]{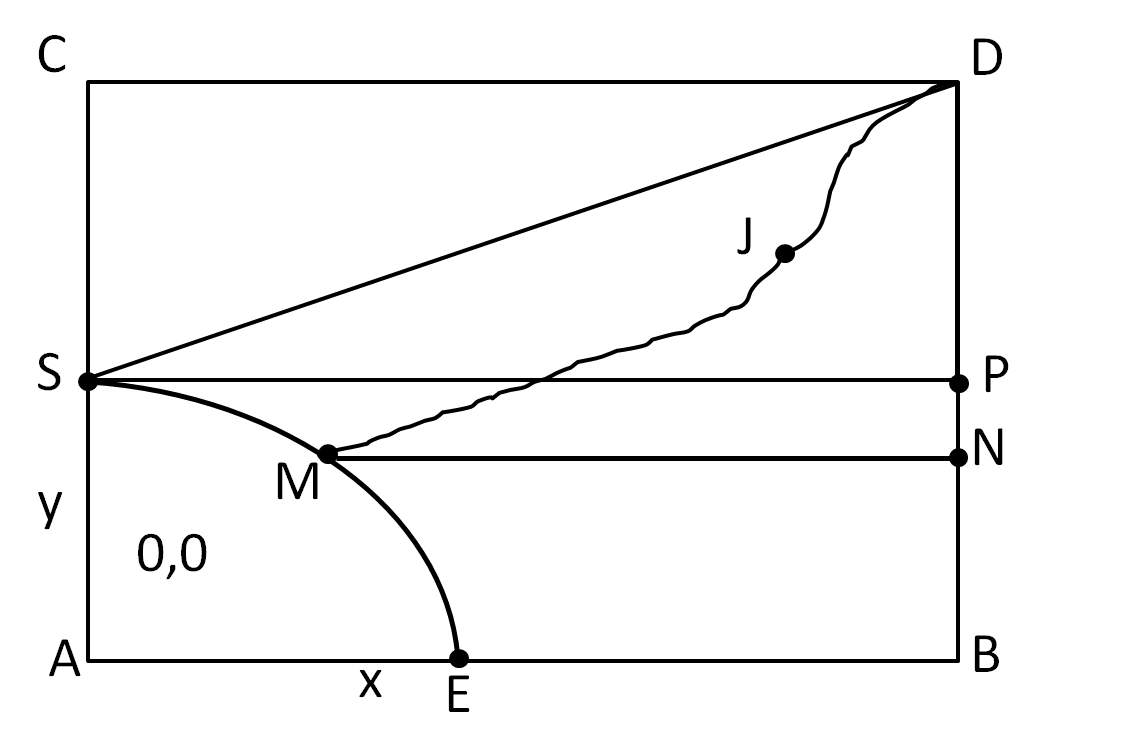}\\
  \caption{Optimal mechanism with 3 menu items.}
  \label{fig:bundle_extrem01}
\end{figure}

We consider two cases:

Case 1: When there is a region that chooses first item randomly and choose second item with probability 1, i.e., there is region with allocation
Look at Fig.~\ref{fig:bundle_extrem01}, $SMJDC$ represents the $(*,1)$ region.
When $y\in[y_B,y_N]$, we have $u(x_B,y)\leq x_B-x_A$, Lemma~\ref{lemma:fourcanon} implies
$$R_u(u(x_B,y),y)\geq f_2(y)[x_Bf_1(x_B)-\int_{x_A}^{x_B}f_1(x)[3+PR(f_1(x))+PR(f_2(y))]dx].$$

When $y\in[y_N,y_D]$, we have $x_J\geq x_A$, Lemma \ref{lemma:four2} implies
$$v(x_J)\geq  x_Bf_1(x_B)-\int_{x_A}^{x_B}f_1(x)[3+PR(f_1(x))+PR(f_2(y))]dx.$$

We have
\begin{eqnarray}
&&x_Bf_1(x_B)-\int_{x_A}^{x_B}f_1(x)[3+PR(f_1(x))+PR(f_2(y))]dx\nonumber\\
&=&x_Af_1(x_A)+\int_{x_A}^{x_B}[xf'_1(x)+f_1(x)]dx-\int_{x_A}^{x_B}f_1(x)[3+PR(f_1(x))+PR(f_2(y))]dx\nonumber\\
&=&x_Af_1(x_A)\int_{x_A}^{x_B}f_1(x)dx-\int_{x_A}^{x_B}f_1(x)[2+PR(f_2(y))]dx\nonumber\\
&=&\int_{x_A}^{x_B}f_1(x)[x_Af_1(x_A)-2-PR(f_2(y))]dx\geq0\nonumber
\end{eqnarray}
Thus $R_u(u(x_B,y),y)\geq 0$, $y\in[y_B,y_N]$ and $v(x_J)\geq 0$, $py\in[y_N,y_D]$.
Because $sgn(R_u(u(x_B,y),y))=sgn(v_J)\geq 0, y\in[y_N,y_D]$, $R_u(u(x_B,y),y)\geq 0, y\in[y_N,y_D]$.
By Lemma~\ref{lemma:fourcanon} and \ref{lemma:four2}, with fixed $u(B),u(N)$, and $u(D)$, optimal $u(ND)$ and $u(BN)$ are straight lines.
Using similar method as shown in the Theorem~\ref{thm:4items}, we can prove that with fixed $u(B)$ and $u(D)$, the optimal utility function is $u(x_B,y)=u(B)+q_{23}(y-y_B)~~y\in[y_B,y_D]$, where $q_{23}=\frac{u(D)-u(B)}{y_D-y_B}$. But we should notice that, this operation may change $u(AC)$. This leads to two further cases: (1) the operation changes $u(AC)$; (2) otherwise.  It can be seen case (1) is strictly more general, so we only need to consider case (1). 

Draw a vertical line through $S$ across $BD$ at $P$, we fix $u(B)$ and $u(D)$. Instead of letting $u(BD)$ be a straight line, we let $u(BD)$ be piece linear with one break point $u(P)$ in order not to change $u(AC)$. Let $u(P)=x_B-x_A$,
$$u(x_B,y)=u(P)+\frac{y-y_P}{y_D-y_P}\cdot (u(D)-u(P))\qquad y\in[y_P,y_D]$$
$$u(x_B,y)=u(B)+\frac{y-y_B}{y_P-y_B}\cdot (u(P)-u(B))\qquad y\in[y_B,y_P]$$
After this operation, points $u(S),u(P),u(D)$ are in a new plane. So the boundary of $(*,1)$ region, curve $MJD$, moves into $CSD$ region.
Using similar method as is shown in the Theorem~\ref{thm:4items}, total revenue will weakly increase.


Then we set
$$u(x,y_C)=u(C)+\frac{x-x_C}{x_D-x_C}\cdot(u(D)-u(C))\qquad x\in[x_C,x_D]$$
After this operation, points $u(C),u(S),u(D)$ are in a new plane. The boundary of $(*,1)$ region, originally curve $MJD$, becomes the straight line $SD$. So this operation will not affect the utilities on segment $AB$. Using same argument above, total revenue will weakly increase. Now there are 2 non-zero menu items chosen in region $CSPD$ and 1 non-zero menu item chosen in region $SMEBP$.
\citet[Theorem 16]{Manelli2007}  says in the optimal mechanism, there must exist a segment chooses $(1,1)$ allocation.
So these 2 non-zero menus chosen by region $CSPD$ must be the same menu $(1,1,t)$. So there are two non-zero menus in total:$(1,1,t)$ and $(1,\alpha, t_1)$.

Case 2: When there is no $(*,1)$ region.

Now Condition $"PR(f_1(x))\leq y_Af_2(y_A)-2,\forall x"$ is not enough, because when we do the same manipulation for $u(BD)$, $u(CS)$ will change at the same time. If a point have chosen non-zero menu item, it must choose first item with probability 1. According to this, we have the following equation:

\begin{eqnarray}
R_{SPDC}&=&\int_{SP}\mathbf{T}\cdot \mathbf{\hat n}ds+\int_{CD}\mathbf{T}\cdot \mathbf{\hat n}ds+\int_{PD}\mathbf{T}\cdot \mathbf{\hat n}ds+\int_{SC}\mathbf{T}\cdot \mathbf{\hat n}ds-\int_{SPDC}\triangle(z)u(z)dz\nonumber\\
&=&c+\int_{y_S}^{y_C}u(x_B,y)x_Bf_1(x_B)f_2(y)dy\nonumber\\
&&-\int_{y_S}^{y_C}(u(x_B,y)+x_A-x_B)x_Af_1(x_A)f_2(y)dy\nonumber\\
&&-\int_{y_S}^{y_C}\int_{x_A}^{x_B}\triangle(x,y)(u(x_B,y)+x-x_B)dxdy\nonumber\\
&=&c+\int_{y_S}^{y_C}R(u(x_B,y),y)dy\nonumber
\end{eqnarray}

Here, $c=\int_{SP}\mathbf{T}\cdot \mathbf{\hat n}ds+\int_{CD}\mathbf{T}\cdot \mathbf{\hat n}ds$. When we fix $u(D)$ and $u(P)$, $c$ is constant.
Let $R=u(x_B,y)x_Bf_1(x_B)f_2(y)-(u(x_B,y)+x_A-x_B)x_Af_1(x_A)f_2(y)-\int_{x_A}^{x_B}\triangle(x,y)(u(x_B,y)+x-x_B)dx$ and $R_u(u(x_B,y),y)$ denote the partial derivative wrt. $u(x_B,y)$.
\begin{eqnarray}
&&R_u(u(x_B,y),y)\nonumber\\
&=&[x_Bf_1(x_B)-x_Af_1(x_A)-\int_{x_A}^{x_B}f_1(x)[3+PR(f_1(x))+PR(f_2(y))]dx]\cdot f_2(y)\nonumber\\
&=&[-2-PR(f_2(y))]\cdot f_2(y)\leq0\nonumber
\end{eqnarray}
This means $R_{SPDC}$ is increasing as utilities on segment $PD$ decreases and is decreasing as utilities on segment $PD$ increases. Using same argument as in Case 1, we know total revenue is increasing as utilities on segment $PB$ increases and is decreasing as utilities on segment $PD$ decreases.
\begin{figure}[H]
  \centering
  \includegraphics[width=6cm]{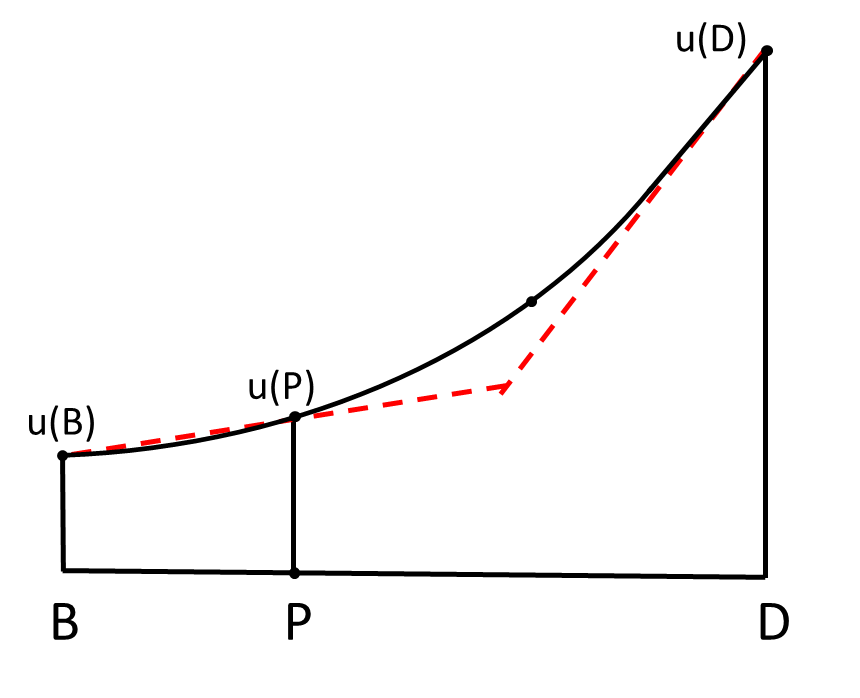}\\
  \caption{Utility function on BD, the red dashed line including point u(D) has derivative 1.}
  \label{fig:bundle_extrem02}
\end{figure}
As shown in Fig.~\ref{fig:bundle_extrem02}, the new mechanism with red dashed line $u(BD)$ yields higher revenue than the original mechanism with black solid line $u(BD)$. So there are two non-zero menu items in total:$(1,1,t)$ and $(1,\alpha, t_1)$.
$\Box$
\end{proof}

\begin{proof}\textbf{of Lemma~\ref{lemma:monotonePR2}}
Since point $M$ is on $(0,0)$ part, so $u(M)=0$.
For point $(x,y)$ in region $NMD$, $u(x,y)=u(M)+y-y_M+\int_{x_M}^xq_1(l)dl=y-y_M+\int_{x_M}^xq_1(l)dl$.

Rewrite the revenue formula for region $NMGD$ as follows,
\begin{eqnarray}
R_{NMGD}&=&\oint_{NMGD}\mathbf{T}\cdot \mathbf{\hat n}ds-\int_{NMGD}\triangle(z)u(z)dz\nonumber\\
&=&2\int_{x_N}^{x_D}y_Cu(x,y_C)f_1(y_C)f_1(x)dx-2\int_{NM}\mathbf{T}\cdot \mathbf{\hat n}ds\nonumber\\
& &-2\int_{x_N}^{x_D}\int_x^{y_C}u(x,y)\triangle(x,y)dydx\nonumber\\
&=&2\int_{x_N}^{x_D}x_B[y_C-y_M+\int_{x_N}^xq_1(l)dl]f_1(x_B)f_1(x)dx\nonumber\\
& &-2\int_{x_N}^{x_D}\int_x^{y_C}[y-y_M+\int_{x_N}^xq_1(l)dl]\triangle(x,y)dydx-2\int_{NM}\mathbf{T}\cdot \mathbf{\hat n}ds\nonumber\\
&=&2\int_{x_N}^{x_D}q_1(x)(\int_x^{x_D}[x_Bf_1(x_B)f_1(l)-\int_l^{x_D}\triangle(l,y)dy]dl)dx+C_1\nonumber\\
&=&2\int_{x_N}^{x_D}q_1(x)v(x)dx+C_1\nonumber
\end{eqnarray}

The rest terms are denoted by $C_1$ which only depends on $y_M$. When $y_M$ is fixed, $C_1$ is a constant.
In the last equality, let $v(x)=\int_x^{x_D}[x_Bf_1(x_B)f_1(l)-\int_l^{x_D}\triangle(l,y)dy]dl$, thus it is independent from $q_1$.

To find out the optimal $q_1$, divide $[x_N,x_D]$ into several intervals according to $v(x)$.
Since $v(x)$ is a continuous function, we can assume in intervals $v(x)\leq 0,x\in[n_i,d_i], \forall 1\leq i\leq l$, and $v(x)\geq 0,x\in[d_i,n_{i+1}],\forall 1\leq i\leq l-1,\textrm{and }m_1=x_N,b_n=x_D$. There is always a rational number in any interval, so the number of intervals is countable. It is possible that $d_1=x_N$, $n_l=x_D$ and $l$ may be infinite.
\begin{eqnarray}
&&\int_{x_N}^{x_D}q_1(x)v(x)dx\nonumber\\
&=&\sum_{i=1}^l\int_{n_i}^{d_i}q_1(x)v(x)dx+\sum_{i=1}^{l-1}\int_{d_i}^{n_{i+1}}q_1(x)v(x)dx\nonumber\\
&\leq&\sum_{i=1}^lq_1(n_i)\int_{n_i}^{d_i}v(x)dx+\sum_{i=1}^{l-1}q_1(n_{i+1})\int_{d_i}^{n_{i+1}}v(x)dx\label{eq3}\\
&=&q_1(n_1)\int_{x_N}^{x_D}v(x)dx+\sum_{i=2}^l(q_1(n_i)-q_1(n_{i-1}))\int_{d_{i-1}}^{d_l}v(x)dx\nonumber\\
&\leq&q_1(n_1)\int_{x_N}^{x_D}v(x)dx+\sum_{i=2}^l(q_1(n_i)-q_1(n_{i-1}))\int_{x'}^{x_D}v(x)dx\label{eq2}\\
&\leq&q_1(x_N)\int_{x_N}^{x'}v(x)dx+1\times\int_{x'}^{x_D}v(x)dx\nonumber
\end{eqnarray}

Let $x'$ denote the point in $[x_N,x_D]$ where $g(s)=\int_s^{x_D}v(x)dx$ achieves the maximum value in (\ref{eq2}).
The inequality~(\ref{eq3}) are only based on the facts that $q_1(x)$ is a weakly monotone function and they become equalities by setting $q_1(x)=q_1(x_N),x\in[x_N,x'), q_1(x)=1,x\in[x',x_D]$. So there are at most two pieces in $u(BD)$.
\end{proof}

\begin{proof}\textbf{of Lemma~\ref{lemma:monotonePR3}}
The proof is similar to Lemma~\ref{lemma:fourcanon}.
Rewrite the revenue formula of region $CSMN$ as follows,
\begin{eqnarray}
R_{CSMN}&=&\oint_{CSMN} \mathbf{T}\cdot \mathbf{\hat n}ds - \int_{CSMN}\triangle(z)u(z)dz\nonumber\\
&=&\int_{CSMN} \mathbf{T}\cdot \mathbf{\hat n}ds+\int_{x_C}^{x_N}y_Cu(x,y_C)f_1(y_C)f_1(x)dx\nonumber\\
& &-\int_{x_C}^{x_N}\int_{y_C-u(x,y_C)}^{y_C}\triangle(x,y)u(x,y)dydx\nonumber\\
&=&\int_{x_C}^{x_N}[y_Cu(x,y_C)f_1(y_C)f_1(x)-\int_{y_C-u(x,y_C)}^{y_C}(u(x,y_C)-y_C+y)\triangle(x,y)dy]dx+C_2\nonumber\\
&=&\int_{x_C}^{x_N}R(u(x,y_C),x)dx+C_2\nonumber
\end{eqnarray}
We let $C_2=\int_{CSMN} \mathbf{T}\cdot \mathbf{\hat n}ds$, it only depends on $u(C)$ and $u(N)$.
Let $R(u(x,y_C),x)=y_Cu(x,y_C)f_1(y_C)f_1(x)-\int_{y_C-u(x,y_C)}^{y_C}(u(x,y_C)-y_C+y)\triangle(x,y)dx$.
Pick point $K$ in $CN$, we have,
$$\frac{\partial R}{\partial u}(u(x_K,y_C),x_K)=f_1(x_K)[y_Cf_1(y_C)-\int_{y_C-u(x_K,y_C)}^{y_C}f_1(y)[3+PR(f_1(x_K))+PR(f_1(y))]dy]$$
While $x_K$ increases, $u(x_K,y_C)$ and $PR(f(x_K))$ increase, $y_Cf_1(y_C)-\int_{y_C-u(x_K,y_C)}^{y_C}f_1(y)[3+PR(f_1(x_K))+PR(f_1(y))]dy$ decreases. WLOG, we can assume
$\frac{\partial R}{\partial u}(u(x,y_C),x)\geq 0, x\in [x_C,x_K]$ and $\frac{\partial R}{\partial u}(u(x,y_C),x)\leq 0, x\in [x_K,x_N]$.
Revenue is increasing as $u(KN)$ decrease. Revenue is increasing as $u(CK)$ increase.
Let $u^1(x,y_C)=u(C)+\frac{u(K)-u(C)}{x_K-x_C}(x-x_C)$, $u^2(x,y_C)=u(N)+q_1(x_N,y_N)(x-x_N)$, then $u(x,y_C)=max(u^1(x,y_C),u^2(x,y_C)), ~~x\in[x_C,x_N]$ gives the optimal revenue with fixed $u(C),u(K),u(N)$ and $q_1(x_N,y_C)$. So optimal $u(CN)$ comprises at most two pieces.
\end{proof}

\begin{proof}\textbf{of Theorem~\ref{theorem:2}}
\begin{figure}
  \centering
  \includegraphics[width=6cm]{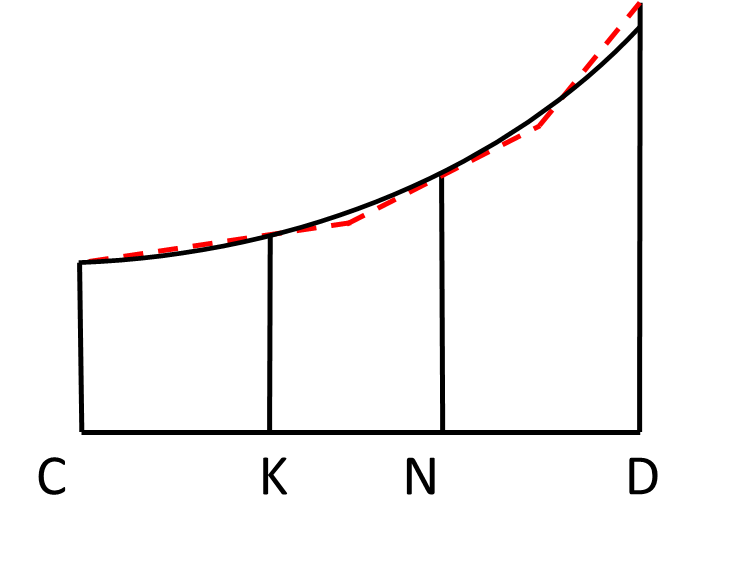}\\
  \caption{Symmetric value distribution.}
  \label{fig:example5}
\end{figure}

When $q_1(x_N)$ are fixed, $q_1(x) (x\in(x_N,x_D])$ is irrelevant to $R_{CSMN}$. When $u(C)$ and $u(N)$ are fixed, $u(x,y_B) (x\in(x_C,x_M))$ is irrelevant to $R_{NMGD}$.

Let's look at Fig.~\ref{fig:example5}. When $u(C),u(K),u(N),q_1(x_N)$ are fixed, the buyer utility on the red dashed line is greater than that on the black solid line between $CK$. The buyer utility on the red dashed line is less than that on the black solid line between $KN$. According to Lemma ~\ref{lemma:monotonePR2} and ~\ref{lemma:monotonePR3}, $R_{CSMN}$ and $R_{NMGD}$ are larger, so the total revenue is larger. Furthermore, optimal utility function on $ND$ is piecewise linear with two slopes: $q_1(x_N)$ and $1$. To sum up, the optimal utilities on $CD$ is piecewise linear with at most three pieces. Therefore, the optimal mechanism is of the following form:

\begin{center}
\begin{tabular}{|c|c|c|}
\hline
$q_1$ & $q_2$ & $t$\\
\hline
0 & 0 & 0\\
\hline
1 & $\alpha_1$ & $t_1$\\
\hline
1 & $\alpha_2$ & $t_2$\\
\hline
$\alpha_1$ & 1 & $t_1$\\
\hline
$\alpha_2$ & 1 & $t_2$\\
\hline
1 & 1 & $t_3$\\
\hline
\end{tabular}
\end{center}
\end{proof}

\begin{proof}\textbf{of Lemma~\ref{lemma:uni1}}
For $(x,y)$ in $CSEBD$, the utility of choosing $(q,1-q)$ is $xq+y(1-q)-t(q)=(x-y)q+y-t(q)$. The buyer will choose the menu item that achieves $max_{q}\{(x-y)q+y-t(q)\}=y+max_{q}\{(x-y)q-t(q)\}$. This means that which menu item will be chosen depends entirely on the value of $x-y$. For two points $(y_1+l,y_1)$ and $(y_2+l,y_2)$ in $CSEBD$, they must share the same allocation. Use $q_2(y)$ as a shortcut of $q_2(x_B,y)$.

Fixing $u(W)$ and $q_2(y_W)$, for any point $(x,y)$ in region $EBW$, the buyer's utility is
\begin{eqnarray}
u(x,y)&=&u(W)-(u(W)-u(x_B,y+x_B-x))-(u(x_B,y+x_B-x)-u(x,y))\nonumber\\
&=&u(W)-((x_B-x)(1-q_2(y+x_B-x))+(x_B-x)q_2(y+x_B-x))-\int_{x_B+y-x}^{y_W}q_2(l)dl\nonumber\\
&=&u(W)-(x_B-x)-\int_{x_B+y-x}^{y_W}q_2(l)dl\nonumber
\end{eqnarray}
Revenue formula in this region is,
\begin{eqnarray}
R_{EBW}&=&\oint_{EBW} \mathbf{T}\cdot \mathbf{\hat n}ds - \int_{EBW}\triangle(z)u(z)dz\nonumber\\
&=&\int_{WE}\mathbf{T}\cdot \mathbf{\hat n}ds+\int_{BW}\mathbf{T}\cdot \mathbf{\hat n}ds+\int_{EB}\mathbf{T}\cdot \mathbf{\hat n}ds- \int_{EBW}\triangle(z)u(z)dz\nonumber
\end{eqnarray}

The forms of the second and third terms in the expression above are as follows. Here, $a_i,b_i,i=1,...,7$ and $C_j,v_j,j=1,...,4$ are expressions independent of $q_2(y), y\in[y_B,y_W]$.

\begin{eqnarray}
& &\int_{BW}\mathbf{T}\cdot \mathbf{\hat n}ds=\int_{x_B}^{y_W}y_Bu(x_B,y)dy=\int_{a_1}^{b_1}h_1(y)[\int_{a_2(y)}^{b_2(y)}q_2(y)dx+g_1(y)]dy\nonumber\\
&=&\int_{y_B}^{y_W}q_2(y)v_1(y)+C_1\nonumber
\end{eqnarray}
\begin{eqnarray}
& &\int_{EB}\mathbf{T}\cdot \mathbf{\hat n}ds=\int_{x_E}^{x_B}-y_Au(x,y_A)dx=\int_{a_3}^{b_3}h_2(x)[\int_{a_4(x)}^{b_4(x)}q_2(y)dy+g_2(x)]dx\nonumber\\
&=&\int_{y_C}^{y_W}q_2(y)v_2(y)+C_2\nonumber
\end{eqnarray}

Since $f_1$ and $f_2$ are uniform distributions, $\triangle(x,y)=3f_1(x)f_2(y)+xf'_1(x)+yf'_2(y)=3f$ is a constant.
\begin{eqnarray}
&&\int_{EBW}\triangle(z)u(z)dz=\int_{y_C}^{y_W}\int_{x_B-y+y_C}^{x_B}3fu(x,y)dxdy\nonumber\\
&=&\int_{a_5}^{b_5}\int_{a_6(y)}^{b_6(y)}[\int_{a_7(x,y)}^{b_7(x,y)}q_2(l)dl+g_3(x,y)]dxdy=\int_{y_C}^{y_W}q_2(y)v_3(y)+C_3\nonumber
\end{eqnarray}
\begin{eqnarray}
\textrm{Thus,}&&R_{EBW}=\int_{WE}\mathbf{T}\cdot \mathbf{\hat n}ds+\int_{y_C}^{y_W}q_2(y)(v_1(y)+v_2(y)+v_3(y))dy+C_1+C_2+C_3\nonumber\\
&=&\int_{y_C}^{y_W}q_2(y)v_4(y)dy+C_4\nonumber
\end{eqnarray}

The following proof is similar to Lemma \ref{lemma:monotonePR2}, optimal $q_2(y) ~~y\in[y_C,y_W]$ comprises two parts.
There is $y'\in[y_C,y_W]$ such that

\begin{displaymath}
q_2(y)=\left\{\begin{array}{ll}
0 & y\in[y_C,y')\\
q_2(y_W) & y\in[y',y_W]
\end{array}\right.
\end{displaymath}
The utility function is
\begin{displaymath}
u(x_B,y)=\left\{\begin{array}{ll}
u(W)+(y-y_W)q_2(y_W) & y\in[y',y_W]\\
u(W)+(y'-y_W)q_2(y_W) & y\in[y_B,y']
\end{array}\right.
\end{displaymath}
Hence $u(x_B,y)\geq u(W)+(y'-y_W)q_2(y_W)\geq u(W)+(y_B-y_W)$.
Because $WE$ is a 45 degree line, $x_B-x_E=y_W-y_B$. Then $u(W)=(y_W-y_B)q_2(y_W)+(x_B-x_E)(1-q_2(y_W))=y_W-y_B$. We get $u(x_B,y)\geq 0$.
This new utility function satisfies the convexity and nonnegative property, so it's feasible. Therefore, $u(BW)$ comprises at most two pieces.
\end{proof}

\begin{proof}\textbf{of Lemma~\ref{lemma:uni2}}
\begin{eqnarray}
&&R_{GSMEWD}\nonumber\\
&=&\int_{GSMEW}\mathbf{T}\cdot \mathbf{\hat n}ds +\int_{GD}\mathbf{T}\cdot \mathbf{\hat n}ds +\int_{WD}\mathbf{T}\cdot \mathbf{\hat n}ds  - \int_{GSMD}3fu(z)dz-\int_{DMEW}3fu(z)dz\nonumber\\
&=&C+\int_{x_G}^{x_D}y_Cu(x,y_C)fdx-3f\int_{x_G}^{x_D}\int_{y_C-u(x,y_C)}^{y_C}(y-y_C+u(x,y_C))dydx\nonumber\\
& &+\int_{y_W}^{y_D}x_Bu(x_B,y)fdy-3f\int_{y_W}^{y_D}\int_{x_W-u(x_W,y)}^{x_W}(x-x_W+u(x_W,y))dxdy\nonumber\\
&=&C+\int_{x_G}^{x_D}fu(x,y_C)(y_C-\frac32u(x,y_C))dx+\int_{y_W}^{y_D}fu(x_B,y)(x_B-\frac32u(x_B,y))dy\nonumber
\end{eqnarray}
So, for fixed $u(D)$ and $u(W)$, when $u(x_B,y)> \frac13x_B, x\in(y_W,y_D)$, revenue is decreasing in $u(x_B,y)$; when $u(x_B,y)< \frac13x_B, y\in(y_W,y_D)$, revenue is increasing in $u(x_B,y)$. Similar to the proof in Lemma~\ref{lemma:fourcanon}, WLOG,
we can assume there is a point $P$ on $WD$ segment, such that $u(x_B,y_P)=\frac13x_B$. Since $u(PD)\geq u(P)$, revenue is increasing as $u(PD)$ decrease. Since $u(WP)\leq u(P)$, revenue is increasing as $u(WP)$ increase. Let $u^1(x_B,y)=u(W)+\frac{u(P)-u(W)}{y_P-y_W}(y-y_W)$, $u^2(x_B,y)=u(D)+q_2(y_D)(y-y_D)$. Then $u(x_B,y)=max(u^1(x_B,y),u^2(x_B,y))~~y\in[y_W,y_D]$ gives the optimal revenue with fixed $u(W),u(P),u(D)$, and $q_2(y_D)$. So optimal $u(WD)$ comprises at most two pieces.
\end{proof}

\begin{proof}\textbf{of Theorem \ref{thm:uni}}
\begin{figure}
  \centering
  \includegraphics[width=6cm]{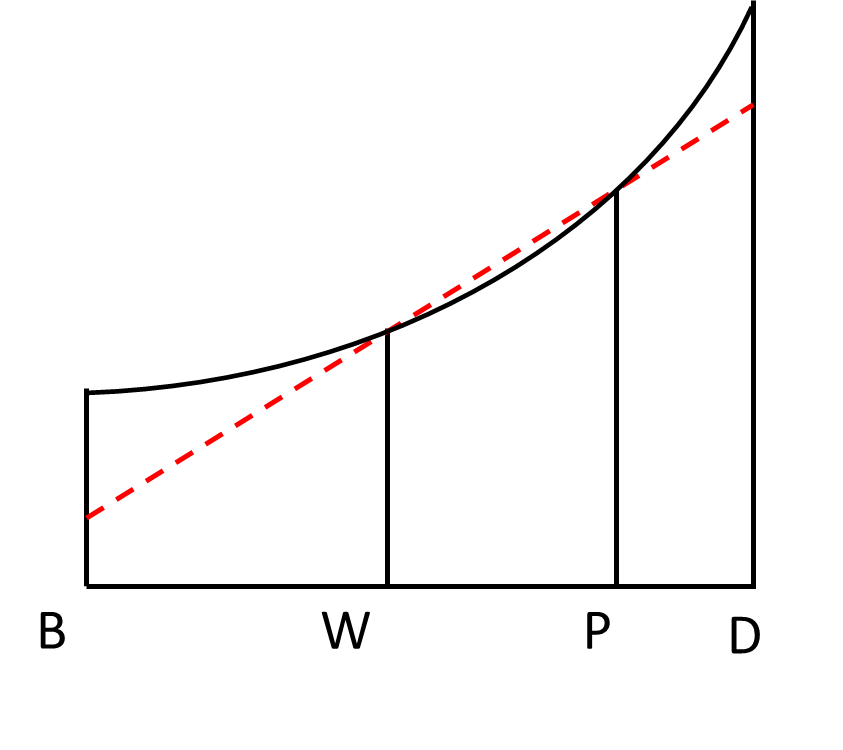}\\
  \caption{optimal utility function.}
  \label{fig:example7}
\end{figure}

We can now settle the optimal utilities on $BD$, subject to fixed values of $u(W)$ and $u(P)$ as well as the convexity of $u$.
Let $\alpha=\frac{u(P)-u(W)}{y_P-y_W}$ and $t_\alpha=(1-\alpha)x_P+\alpha y_P-u(P)$. Adding a new menu item $(1-\alpha, \alpha, t_\alpha)$,
the utility of choosing this new item is $u^\alpha$, which is denoted by the red dashed line shown in Fig.~\ref{fig:example7}.
Let $u^D$ be the utility obtained by choosing the same menu item as point $D$: $(1-q_2(y_C),q_2(y_C),t^D)$. Thus, for any point $(x,y_C)$ on $CD$, we have
\begin{eqnarray}
&&u(x,y_C)\geq u^D(x,y_C)=(1-q_2(y_C))x+q_2(y_C)y_C-t^D\nonumber\\
&&\geq(1-q_2(y_C))x_D+(1-q_2(y_C))(x-x_D)+q_2(y_C)-t^D\nonumber\\
&&=u(D)+(1-q_2(y_C))(x-x_D)\geq u^\alpha(D)+(1-\alpha)(x-x_D)\nonumber\\
&&=(1-\alpha)x_D+\alpha y_C-t^\alpha+(1-\alpha)(x-x_D)=u^\alpha(x,y_C)\nonumber
\end{eqnarray}

Thus, after adding this new menu item, $u(CD)$ does not change and $u(PW)$ will weakly increase.

In the optimal mechanism with fixed $q_2(y_W)$, according to Lemma \ref{lemma:uni1}, types on $BW$ choose only 2 menu items: $(1-\alpha, \alpha, t_\alpha)$ and $(1,0,t_1)$ for some $t_1$. Using same arguments as $CD$, for $BD$, we have another two menu items: $(1-\beta, \beta, t_\beta)$ and $(0,1,t_2)$ for some $t_2$.

We now consider a brand new menu with only the above four menu items and $(0,0,0)$. Compared to the old menu, $u(PD)$ weakly decreases while $u(PW)$ weakly increase. According to Lemma \ref{lemma:uni2}, $R_{GSMEWD}$ weakly increases. For fixed $q_2(y_W)$ and $u(W)$, $R_{EBW}$ is maximized.
For fixed $q_1(x_G,y_G)$ and $u(G)$, $R_{CSG}$ is maximized. The total revenue weakly increase. The optimal menu consists of at most 5 menu items:
\begin{center}
\begin{tabular}{|c|c|c|}
\hline
$q_1$ & $q_2$ & $t$\\
\hline
0 & 0 & 0\\
\hline
1 & 0 & $t_1$\\
\hline
$1-\alpha$ & $\alpha$ & $t_\alpha$\\
\hline
0 & 1 & $t_2$\\
\hline
$\beta$ & $1-\beta$ & $t_\beta$\\
\hline
\end{tabular}
\end{center}
\end{proof}

\end{appendix}

\end{document}